\documentclass[journal]{IEEEtran}
\usepackage[T1]{fontenc}
\usepackage{times}
\usepackage{amsmath}
\usepackage{amsthm}
\usepackage{color}
\usepackage{float}
\usepackage{array}
\usepackage{amssymb}
\usepackage{amsfonts}
\usepackage{graphicx}
\usepackage[sans]{dsfont}
\usepackage{verbatim} 
\usepackage{multirow} 
\usepackage{slashed,graphicx}
\usepackage{url}
\usepackage{subcaption}
\usepackage{tablefootnote}
\usepackage{color}
\usepackage{xcolor}
\usepackage{cite}
\usepackage{tabulary}
\usepackage{booktabs}
\usepackage{mathtools}
\usepackage{textcomp}
\usepackage[boxed, commentsnumbered, ruled,vlined]{algorithm2e}

\usepackage[hang,flushmargin]{footmisc}
\usepackage{lipsum}
\makeatletter
\newcommand{\algorithmfootnote}[2][\footnotesize]{%
  \let\old@algocf@finish\@algocf@finish
  \def\@algocf@finish{\old@algocf@finish
    \leavevmode\rlap{\begin{minipage}{\linewidth}
    #1#2
    \end{minipage}}%
  }%
}
\makeatother

\usepackage{environ}
\NewEnviron{scaled-equation}{%
\begin{equation}
\scalebox{0.95}{$\BODY$}
\end{equation}
}


\pagenumbering{gobble}
\theoremstyle{plain}
\newtheorem{mydef}{Definition}
\newtheorem{mylemma}{Lemma}
\newtheorem{mytheorem}{Theorem}


\makeatletter
    \def\tagform@#1{\maketag@@@{\normalsize(#1)\@@italiccorr}}
\makeatother

\ifodd 1

\newcommand{\rev}[1]{{{\color{black} #1}}}
\newcommand{\revv}[1]{{{\color{black} #1}}}
\else
\newcommand{\rev}[1]{#1}
\newcommand{\revv}[1]{#1}
\newcommand{\com}[1]{}
\fi


\begin{document}
\title{Pricing Mechanisms for Crowd-Sensed Spatial-Statistics-Based Radio Mapping\\
}
\author{Xuhang~Ying,~\IEEEmembership{Student Member,~IEEE,}
        Sumit~Roy,~\IEEEmembership{Fellow,~IEEE,}
        and~Radha~Poovendran,~\IEEEmembership{Fellow,~IEEE}

\thanks{
Manuscript received October 10, 2016;
revised March 13, 2017; accepted April 28, 2017.
Date of current version April 28, 2017. 
This work was supported in part by NSF EARS Award 1443923 and NSF CPS award CNS-1446866.
}
\thanks{The authors are with the Department of Electrical Engineering, University of Washington, Seattle WA, 98195. Email: \{xhying, sroy, rp3\}@uw.edu.
This work was present in part at IEEE GLOBECOM 2016 \cite{ying2016pricing}.
}
}

\maketitle

\begin{abstract}
Networking on white spaces (i.e., locally unused spectrum) relies on active monitoring of spectrum usage. 
Spectrum databases based on empirical radio propagation models are widely adopted but shown to be error-prone, since they do not account for built environments (e.g., trees and man-made buildings). 
As an alternative, crowd-sensed radio mapping by mobile clients who acquire local spectrum data and transmit it to a central aggregator (platform) for processing, results in more accurate radio maps. 
Success of such crowd-sensing systems presumes some incentive mechanisms to attract user participation. 
In this work, \rev{we assume that the platform who constructs radio environment maps makes one-time offers (the incentive for participation) to users, and collects data from those who accept the offers.
We design pricing mechanisms based on \textit{expected utility (EU) maximization}, where EU captures the tradeoff between radio mapping performance (location and data quality), crowd-sensing cost and uncertainty in offer outcomes (i.e., possible expiration and rejection).
Specifically, we consider both \textit{sequential offering}, where one best price offer is sent to the best user in each round, and \textit{batched offering}, where a batch of offers is made in each round.
For the later, we show that EU is \textit{submodular} in the discrete domain, and propose a mechanism that first fixes the pricing rule, and selects users based on \textit{Unconstrained Submodular Maximization} (USM); it then compares different pricing rules to find the best batch of offers in each round. We show that USM-based user selection has provable performance guarantee. 
Proposed mechanisms are evaluated and compared against utility-maximization-based baseline mechanisms.
}

\end{abstract}

\begin{IEEEkeywords}
Radio Environment Mapping, Spatial Statistics, Crowd-Sensing, Pricing Mechanism, Expected Utility Maximization, Unconstrained Submodular Maximization.
\end{IEEEkeywords}




\section{Introduction}
The exponential increase in mobile data traffic has naturally translated into similar demands for wireless network capacity for broadband access. 
A significant portion of spectrum is allocated via licensing for various services (e.g., TV broadcasting, radar and satellite services), but is often under-utilized in practice.
To improve spectrum utilization, the FCC allows unlicensed users to opportunistically access locally idle licensed spectrum, aka \textit{White Spaces} (WS), subject to the no-harmful-interference constraint \cite{fcc2008second}.

A key step of WS networking is to actively monitor spectrum usage locally and identify WS opportunities.
Currently, empirical radio propagation models are widely used and implmented in spectrum databases \cite{harrison2010much,makris2012quantifying,fcc2010second,hessar2015capacity}  for this purpose, but recent studies \cite{phillips2011bounding, achtzehn2014improving, ying2015revisiting} have shown that they are often locally inaccurate since they do not account for built environments like trees and man-made buildings.
\rev{To augment spectrum databases}, radio mapping via \textit{spatial statistics} (e.g., Kriging \cite{cressie2015statistics, ying2015revisiting, achtzehn2012improving} and Gaussian Process (GP) \cite{deshpande2004model,krause2008near}) has been proposed, which leverages local measured RSSI\footnote{Received Signal Strength Indicator} data and provides accurate RSSI estimation at unmeasured locations.

RSSI data collection requires campaigns like drive tests, which are time- and labor-intensive for monitoring a wide area.
Deploying specialized spectrum sensors is a second option, but large-scale deployment  is typically costly in proportion to the hardware and deployment expenses. 
An economically viable alternative is \textit{crowd-sensing} with commodity mobile devices that are enabled to do such spectrum sensing, i.e., outsourcing sensing tasks opportunistically to mobile users with clients embedded with such sensing devices. 
However, since users consume resources (e.g., battery, CPU and memory) for sensing, they need to be properly compensated or \textit{incentivized}. 

\rev{Crowd-sensed radio mapping is different from other crowd-sensing applications \cite{yang2012crowdsourcing, koutsopoulos2013optimal, peng2015pay} in several ways.
First, it samples the (unknown) RSSI field (opportunistically) at reported user locations at a time instant (or within a short duration), and applies spatial statistics to estimate RSSI values at unmeasured locations. 
Therefore, to obtain a RSSI estimate at a {\em target} location that is not sampled, only the {\em relative positioning} of selected users (and the local spatial statistical model used for interpolation) matters.
Second, user devices are heterogeneous  (e.g. various smartphones and tablets from different manufacturers and even different model families from the same manufacturer) and naturally provide data of different quality due to manufacturing variances and specifically, noise figure of the underlying circuit design, which needs to be accounted in our spatial interpolation. 
Third, user devices are not dedicated to sensing but share resources (CPU cycles and battery notably) with  many other tasks.
Hence, in addition to energy or battery costs, users will incur \textit{opportunity costs} depending on current device statuses, which is the loss of utility if they decide to spend resources on sensing instead of other tasks. } 

In this work, we consider \textit{pricing} \cite{singla2013truthful,he2014toward,han2016posted} for crowd-sensed radio mapping. 
Given a spatial statistical model, the platform determines the \textit{value} of a set of users based on location, data quality and its own preferences, which is the amount (of money) it is willing to pay. 
Each user has a private \textit{sensing cost} (i.e.,  sum of energy and opportunity costs).
Selected users receive one-time price offers and have only one chance to make a decision (either accept or reject) by a given deadline; otherwise, the offer will be expired. 
Here, we consider \textit{rational} users who accept offers if its sensing cost is no greater than the offered price.
Those who accept offers perform sensing at reported locations, upload data and receive  \textit{payments}.
Therefore, from the platform's perspective, a set of users is associated with a \textit{utility} or gain, which is the difference between value and total payment.
Due to possible offer expiration (due to network congestion etc.) and rejection, the platform aims to maximize the \textit{expected utility} (EU) by determining who to send offers to (i.e., \textit{user selection}) and how much to offer (i.e., \textit{price  determination}). 


Our primary contributions are as follows:
\begin{itemize}
\item We design a crowd-sensing system that periodically acquires spectrum data from users for radio mapping.

\item We introduce EU and formulate pricing mechanism design as \textit{EU maximization}. 
We first propose \textit{sequential offering}, where the platform sends out the best offer to the best user in each round, and keeps offering until the next one is no longer profitable.
Then we generalize it to \textit{batched (i.e., single-batch and multi-batch) offering}, where a batch of multiple offers are made in each round. 

\item For batched offering, \rev{we show that EU is \textit{submodular} in the discrete domain.
We propose a pricing mechanism that first fixes the pricing rule, and selects users based on \textit{Unconstrained Submodular Maximization} (USM); it compares different pricing rules to find the best batch of offers that maximizes EU (instead of the best-case utility) in each round. 
\revv{We adopt the linear-time deterministic USM algorithm that provides a $1/3$-approximation guarantee \cite{buchbinder2015tight} for user selection.
In practice, however, EU is difficult to analytically evaluate and Monte-Carlo estimated EU is fed to the algorithm. 
We show that its worst-case performance is degraded by estimation errors, and the reduced amount grows linearly in the number of users given the maximum estimation error (Theorem~\ref{theorem:USM_with_approx_EU_bound}).}}

\item \rev{We conduct simulations to evaluate the proposed EU-maximization-based mechanisms, and further compare them against baseline mechanisms \revv{that aim to find the best batch of offers that maximize the best-case utility  in each round}.
Results show that our single-batch mechanism is better than the single-batch baseline mechanism with an improvement ranging from $8.5\%$ to $40.5\%$.
If more batches are allowed, our multi-batch mechanism achieves close performance with the multi-batch baseline mechanism, but requires much fewer batches ($2.5$ versus $7.7$ batches on average) and thus a much smaller delay.
Sequential offering works better than the single-batch baseline mechanism, but has a very large cumulative delay. 
Offer expiration adversely affected all mechanisms, but sequential and multi-batch offering are more robust.}
\end{itemize}

The rest of this paper is organized as follows.
We review related work in Section~\ref{sec:related_works} and provide a two-user tutorial example in Section~\ref{sec:example}.
In Section~\ref{sec:model}, we provide background on submodularity and present our models.
Our pricing mechanisms are presented in Section~\ref{sec:pricing_mechanism} and evaluated in Section~\ref{sec:evaluation}.
We conclude this study in Section~\ref{sec:conclusion}.

\section{Related Works} \label{sec:related_works}
In recent years, spatial-statistics-based radio mapping has been proposed to better capture local radio environments to augment spectrum databases. 
In \cite{phillips2012practical}, Phillips \textit{et al.} applied a statistical interpolation technique called Ordinary Kriging to map the coverage of WiMax networks. 
Similar techniques have been applied to estimate the coverage area of single-transmitter \cite{ying2015revisiting} and multi-transmitter networks\cite{achtzehn2012improving} in TV bands.
A more detailed discussion is available in \cite{yilmaz2013radio}. 

Radio mapping requires a large amount of sensing data, and incentivized crowd-sensing is considered as an economically viable option.
A number of various incentive mechanisms have been proposed.
In \cite{yang2012crowdsourcing}, Yang \textit{et al.} studied a platform-centric incentive model, where users share the reward proportionally in a Stackelberg game.
In mechanisms based on reserve auction \cite{yang2012crowdsourcing, feng2014trac, ying2015incentivizing}, users bid for tasks and receive payments no less than bids when selected. 
One main goal for the platform is to design a truthful mechanism that motivates users to bid at their true private costs.
In \cite{koutsopoulos2013optimal}, Koutsopoulos designed an incentive mechanism to determine participation level and payment allocation to minimize platform's compensation cost with guaranteed service quality.
Other models include all-pay auction \cite{luo2016incentive}, Bayesian models \cite{koutsopoulos2013optimal}, Tullock contests \cite{luo2015crowdsourcing} and posted pricing \cite{singer2013pricing, singla2013truthful, han2016posted}. 
Some are proposed in an \revv{\textit{online}} setting with constraints like budget limits \cite{singer2013pricing, zhao2014crowdsource, han2016taming}, where users arrive in a random order and a typical goal is to maximize a certain objective (e.g., revenue).

\rev{Incentive mechanisms are typically tailored to the crowd-sensing application being considered by incorporating factors like user location, data quality and user availability etc.
As an example, in \cite{feng2014trac}, each task has a specific location tag and each user can only compete for tasks within its service region.
In \cite{peng2015pay}, Peng \textit{et al.} extended the well-known Expectation Maximization algorithm to estimate the quality of sensing data and incorporated it in determining rewards.
In \cite{han2016posted}, Han \textit{et al.} studied a quality-aware Bayesian pricing problem where both users' sensing costs and qualities are random variables, drawn from known distributions.
The goal is to choose an appropriate posted price to recruit a group of users with reasonable sensing quality, and minimize the total expected payment.
If users need to move to designated sensing locations or are available at different time periods, then incentive mechanism design is closely coupled with task allocation \cite{he2014toward} or scheduling \cite{han2016truthful}.}

In this study, we consider incentive mechanism design in an \textit{offline} setting for crowd-sensed radio mapping, \revv{in which the platform acquires data from a pool of users who are interested and available for sensing in each period, and measurements are taken at their current locations}.
We consider data quality in terms of hardware quality, and incorporate it into the spatial statistical model (i.e., GP). 
Distinct from the auction-based incentive mechanism for crowd-sensed radio mapping \cite{ying2015incentivizing}, we are interested in {\em pricing mechanisms where the platform makes one-time price offers to a set of selected users}, and collects data from those who accept offers.
To select users and determine corresponding price offers, we define utility for the platform to trade the value it obtains from the resulting radio map generated based on the offered data, against the total price (crowd-sensing cost), and use the notion of  EU to account for possible offer expiration and rejection. 
We formulate the pricing mechanism design as EU maximization, and propose mechanisms based on USM. 
\section{A Two-User Tutorial Example}\label{sec:example}
\revv{In this section}, we \rev{first present our system architecture and} provide a two-user tutorial example to illustrate the basic idea of pricing for crowd-sensed radio mapping.

\subsection{\rev{System Architecture}}\label{sec:sys_arch}
As shown in Fig.~\ref{fig:arch_pricing}, the platform acquires data periodically from users.
At the beginning of each period, the platform broadcasts a sensing task to all users in the area of interest (AoI) with specific sensing parameters (e.g., center frequency, sampling rate and FFT bin size) to ensure a consistent sensing procedure across different hardware. 
\rev{Note that the task does not specify sensing locations for two reasons.
\revv{First, there is no need since the platform will take a sampling approach, that is, selecting a subset of users {\em after they provide their locations}. 
This is consistent with the underlying spatial statistical model, where only the relative positioning (instead of absolute locations) that matters for the resulting radio map quality.}
Second, it requires extra time and costs for users to move to target sensing locations, which means extra incentivization costs for the platform and added complexity for mechanism design\footnote{See \cite{he2014toward} for more discussions on allocation of tasks with specific locations.}}. \rev{To avoid excessive delay due to communication delay or failure, network congestion 
etc., each offer has a deadline, by which a decision has to be received by the platform (along with the data if accepted); otherwise, the offer will be expired.}

\begin{figure}[t]
    \centering
    \includegraphics[width = 1\columnwidth]{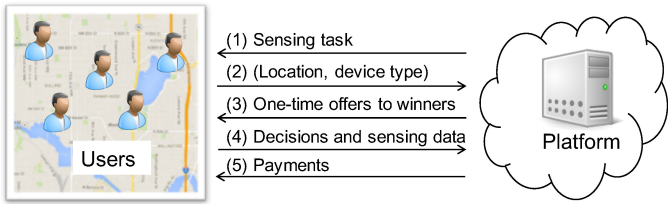}
    \caption{\rev{Pricing-based crowd-sensing system.
    (1) In each period, the platform first broadcasts a sensing task to users in the area of interest.
    (2) Users who are interested and available for sensing in the current period report location and device type.
    (3) The platform determines and sends out one-time offers to selected users.
    (4) If a selected user decides to accept the offer, it performs the required sensing task. 
    All selected users will inform the platform of their decisions (and upload the data) before the deadline. 
    (5) The platform pays users who contribute data.
    }
    }
    \label{fig:arch_pricing}
\end{figure}

In this study, we assume no entry or other overhead costs, that is, a user does not incur a fee to communicate with the platform.
We consider users of low mobility (e.g., pedestrians), who are honest in providing their information and following the protocol. 
We assume small displacements between reported locations and eventual sensing locations. 
We will leave the high-mobility case and security considerations as future work.

\subsection{\rev{A Two-User Scenario}}
Fig.~\ref{fig:two_user_example_topology} illustrates the topology of the two-user example.
The goal of the platform is to estimate the RSSI $Z(x)$ (in dBm) at each location $x \in U$, where $U$ represents the discretized AoI.
There are two users $S=\{1,2\}$ at $x_1$ and $x_2$ in the AoI. 
In each period, each user will incur a \textit{sensing cost} $c_i > 0$ and receive an \textit{offer} $p_i > 0$, when selected by the platform.
We assume \textit{rational} users, who accept the offer if $c_i\leq p_i$, and reject it otherwise.
\rev{In the following discussion, we assume no expired offers and will consider them later in Section~\ref{sec:pricing_mechanism}.}

\begin{figure}[ht!]
    \centering
    \includegraphics[width = 0.5\columnwidth]{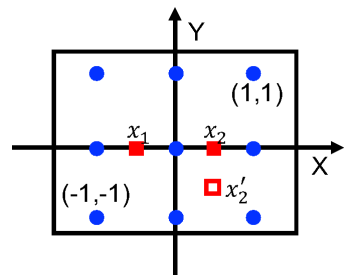}
    \caption{Topology of the two-user example. The AoI is discretized into a mesh grid of $9$ points (blue dots). User $1$ is at $(-0.5, 0)$, and user $2$ is at $(0.5,0)$ or $(0.5, -0.5)$.}
    \label{fig:two_user_example_topology}
\end{figure}

The platform has a \textit{valuation} function $v:2^S\mapsto \mathbb{R}_+$ and a \textit{pricing} function $p:2^S\mapsto \mathbb{R}_+$.
For each set of users $A$, there is an associated \textit{value} $v(A)$ and a total price of all offers $p(A)=\sum_{i\in A}p_i$ (i.e., crowd-sensing cost for the platform), assuming that each offer is unexpired and accepted.
It makes sense for the platform, \rev{as a rational decision maker}, to maximize its \textit{utility} (or profit), i.e.,
\begin{equation}
\max_{A\subseteq S} u(A) = \max_{A\subseteq S} \left( v(A) - p(A) \right),
\label{eq:goal_max_utility}
\end{equation}
\rev{which is an important and widely used concept in economics (e.g., rational choice theory \cite{scott2000rational}) and measures the platform's preference over the set of users $A$.}
For convenience, we also write $v(A)$, $p(A)$ and $u(A)$ as $v_A$, $p_A$ and $u_A$, respectively.

\rev{In practice, however, the probability that an offer is accepted (and data is uploaded to the platform) by the deadline is less than 1, due to possible expiration and rejection.
Hence, only a subset of users in $A$ accept offers, and $u(A)$ is essentially the \textit{best-case} utility.
In this case, it makes more senses for the platform to maximize the average-case or \textit{expected utility} (EU).
More will be discussed later in this example.}


\subsection{How to Valuate \rev{Users}}
In this example, two RSSI data models are considered: \rev{one ignoring shadowing and noise, and the other one \revv{is our GP-based model that accounts for both}}. 
\rev{Both models assume that any small-scale fading over small distances and time has been averaged out via sensing}.

\subsubsection{Model I -- Shadowing-Free, Noise-Free}
This model assumes a \rev{constant but unknown \revv{path-loss-impaired} RSSI} at any location within the AoI. 
When there are no measurements, both users are equally valuable, since either can provide accurate estimation.
Once one is recruited, a second user has zero marginal value.
Hence, we have $v(\{1\})$ $=v(\{2\})$ $=v({\{1,2\}})=v_0$, where $v_0$ \rev{is the value (perceived by the platform) of estimating RSSIs at $U$}.
We set \revv{$v_0=4$} for later calculations.

\subsubsection{Model II (GP) -- Shadowing-Aware, Noise-Aware}
\revv{Under this model, $Z(x)$ for $x \in U$ is the sum of path-loss-impaired average RSSI $\mu(x)$ and spatially correlated shadowing $\delta(x) \sim N(0, \sigma^2_{x})$ with a covariance function $\mathcal{K}(d)$, where $d$ is the distance between two locations.
$Z(x_i)$ for $x_i \in S$ represents the noisy RSSI measured by user $i$, which includes additional hardware noise $\epsilon_i \sim (0, \sigma^2_{\epsilon_i})$ that is independent of shadowing.
Hence, RSSI at each location is modeled as a Gaussian random variable, and RSSIs at $S \cup U$ form a Gaussian random vector, whose joint distribution (Eq.~(\ref{eq:GP_RSSI_model_dist})) is defined by a mean vector and a covariance matrix (Eq.~(\ref{eq:covariance})).
More details about the model will be provided in Section~\ref{sec:GP_RSSI_model}.
Mutual information $MI(A)$ (Eq.~(\ref{eq:GP_RSSI_model_MI})) is used to quantify radio mapping quality of any subset of users $A\subseteq S$, and the valuation function $v(A)$ defined in Eq.~(\ref{eq:valuation_function}) (with $\alpha=0$) is used to translate $MI$ to value that is comparable to payments.}

We consider two cases under this model:
\begin{itemize}
\item Case 1: Users have different locations but provide data of the same quality. 
Suppose $x_1 = (-0.5,0)$, $x_2 = (0.5, 0.5)$ and $\sigma^2_{\epsilon_1}=\sigma^2_{\epsilon_2}=0.5$.

\item Case 2: Users have equally good (i.e., symmetric) locations but provide data of different quality.
Suppose $x_1 = (-0.5,0)$, $x_2 = (0.5, 0)$ and $\sigma^2_{\epsilon_1}=0.5 > \sigma^2_{\epsilon_2}=0.2$.

\end{itemize}

For illustration, we set $\mathcal{K}(d)=15.5 \cdot \exp(-\frac{d}{0.7})$ \cite{ying2015revisiting} (which is consistent with our later simulations in Section~\ref{sec:evaluation}) and set the currency $\kappa$ in $v(A)$ to $10$.
Valuation of users under both models is summarized in Table~\ref{table:example_sampling_value}.

The above highlights a few things. 
First, apart from the platform's preferences (e.g., choice of valuation function and parameters like $\kappa$), the RSSI data model plays an important role.
Second, for a reasonable RSSI data model that considers both shadowing and noise, user locations are important (e.g., Case $1$ of Model II); data quality also matters and affects valuation (e.g., Case $2$ of Model II). 

\begin{table}[]
\centering
\begin{tabular}{|c|c|c|c|c|}
\hline
 & $v(\emptyset)$ & $v(\{1\})$ & $v(\{2\})$ & $v(\{1,2\})$ \\ \hline
Model I & $0$ & $4$ & $4$ & $4$ \\ \hline
Model II - Case $1$ & $0$ & $2.18$ & $1.76$ & $3.48$ \\ \hline
Model II - Case $2$ & $0$ & $2.18$ & $2.23$ & $3.82$ \\ \hline
\end{tabular}
\caption{Valuation of users in the two-user example. }
\label{table:example_sampling_value}
\vspace{-0.5cm}
\end{table}

\subsection{How to Select and Pay Users}\label{sec:example_how_to_select_and_pay_users}
At this point, \rev{$v(\cdot)$ is available to the platform (Table~\ref{table:example_sampling_value}}). 
The next step is to select a subset of users and determine their \rev{price offers}.
This process is called ``incentivizing''.

\subsubsection{Deterministic Cost}\label{sec:deterministic_cost} 
If \rev{user devices are treated} as specialized spectrum sensors, then sensing costs are dominated by \textit{energy (or battery)} costs \cite{pottie2000wireless}.
In this case, it is reasonable to assume \rev{deterministic sensing costs, which can be inferred from the task and device type}.
Since the platform knows $\{c_i\}$, it can set $p_{\{i\}} = c_i$ to minimize payments while guaranteeing offer acceptance,  and search for the best set of users. 


Suppose \revv{$c_1=2$} and \revv{$c_2=1.5$}.
Then $p_{\{1\}}=2$, $p_{\{2\}}=1.5$ and $p_{\{1,2\}}=3.5$.
It is easy to see that $A^* = \{2\}$ leads to the maximum utility in each case.
\revv{If $c_1=c_2=1.5$, we have $p_{\{1\}}=p_{\{1\}}=1.5$ and $p_{\{1,2\}}=3$.
Under Model I, selecting either user will lead to a maximum utility of $2.5$ but not both.
In Case $1$ of Model II, the platform is better off with $A^*=\{1\}$, while it better selects both in Case $2$ of Model II.
Hence, selecting more users does not necessarily leads to a higher utility, since it also means a higher cost for the platform.}

\subsubsection{Random Cost with Known Distributions}\label{sec:example_known_cost_dist}
\rev{
In crowd-sensing, however, a user incurs an additional \textit{opportunity cost}, i.e., the loss of potential gain when the user decides to spend resources on sensing instead of other tasks.
It depends on the task and device status that varies over time.
Hence, the perceived sensing cost in each period consists of a deterministic energy cost and a random opportunity cost.

In this case, it makes sense to model user $i$'s sensing cost as a continuous random variable $C_i$ in $[\underline{c}_i, \bar{c}_i]$, where $\bar{c}_i \geq \underline{c}_i >0$ and $\underline{c}_i$ is the minimum energy cost.
Random variables $\{C_i\}$ are independent of each other.
Since device status is considered sensitive information, $C_i$ is private and only user $i$ knows its realization in each period, $c_i$, by evaluating the task and current device status\footnote{
\rev{In practice, we would expect a crowd-sensing application to be installed and running on users' mobile devices, which has some function that estimates the perceived sensing cost in each period based on the needed resources for the sensing task and the current device status.
Designing such a function for sensing cost estimation will be of practical importance, not only to this work, but also to many other crowd-sensing applications (e.g., \cite{koutsopoulos2013optimal}). 
But this topic is out of the scope of this paper, and will be left as future work.}
}.
We assume that the platform has only \textit{a priori} probabilistic knowledge of $C_i$.
Let $f_{C_i}(c_i)$ and $F_{C_i}(c_i)$ be the probability density function (PDF) and corresponding cumulative density function (CDF), respectively.
The PDF and $\underline{c}_i$, $\bar{c}_i$ could be learned by the platform from the empirical distribution out of prior cost declarations by users of the same device type, or from its long-term interaction with users (e.g., whether or not accept an offer with a known price).
If such prior information is absent, $C_i$ may be assumed to be uniformly distributed over $[\underline{c}_i$, $\bar{c}_i]$.
Hence, it is reasonable to assume that the platform can infer $f_{C_i}(c_i)$ or $F_{C_i}(c_i)$ based on the reported device type.

}

Since the platform does \textit{not} know $\{c_i\}$, it needs to consider possible offer rejections.
For $A=\{1\}$ or $\{2\}$, the uncertainty in user decisions implies the following utility,
\begin{equation}
u_{\{i\}} = 
\begin{cases}
    v_{\{i\}} - p_{\{i\}}, & \mbox{if user $i$ accepts the offer}\\
    0, &\mbox{otherwise}
\end{cases},
\end{equation}
which is a Bernoulli random variable and the acceptance probability is \rev{$\text{Pr}(c_i \leq p_{\{i\}}) = \int_{\underline{c}_i}^{p_{\{i\}}} f_{C_i}(c_i) dc_i = F_{C_i}(p_{\{i\}})$}.
In this case, it makes more sense to consider the EU, 
\begin{equation}
EU_{\{i\}} = \mathbb{E}[u_{\{i\}}] = (v_{\{i\}} - p_{\{i\}}) F_{C_i}(p_{\{i\}})
\label{eq:example_known_dist_max_utility_1},
\end{equation}
and the platform wants to find $p_{\{i\}}^*$ that maximizes $EU_{\{i\}}$, i.e., \textit{EU maximization}.
For $A=\{1,2\}$, the EU is given by
\begin{align}
EU_{\{1,2\}} =& (v_{\{1,2\}}-p_{\{1\}}-p_{\{2\}}) F_{C_1}(p_{\{1\}}) F_{C_2}(p_{\{2\}}) \label{eq:example_known_dist_max_utility_2} \\
&+ (v_{\{1\}}-p_{\{1\}}) F_{C_1}(p_{\{1\}}) (1-F_{C_2}(p_{\{2\}})) \nonumber \\
&+ (v_{\{2\}}-p_{\{2\}}) (1-F_{C_1}(p_{\{1\}}))F_{C_2}(p_{\{2\}}) \nonumber.
\end{align}
and the goal is to find ${\bf p}^*=[p_{\{1\}}^*,~p_{\{2\}}^*]$ that maximizes $EU_{\{1,2\}}$.
Note that in general the platform does not have to send out offers all at once and stop; it can send more backup based on the knowledge of outcomes of previous offers.

\revv{Suppose that} $C_1\sim \revv{U[1,2]}$ and $C_2\sim \revv{U[0.5, 1.5]}$, where $U[\cdot,\cdot]$ denotes the uniform distribution. 
For pricing, the platform's first thought could be setting $p_{\{i\}}=\bar{c}_i$.
Then following reasoning is the same with the deterministic-cost case. 
A natural generalization is to choose a desired probability of acceptance\footnote{Compared to choosing the same desired price for all users, it makes more sense to choose the same desired probability for all users, since users have different cost distributions in general. If all user devices are of the same type, then these two approaches are the same.} 
$\gamma \in [0,1]$ and set $p_{\{i\}}=F^{-1}_{C_i}(\gamma)$ for each user $i$, where $F^{-1}_{C_i}(\cdot)$ is the inverse CDF. 
\rev{Given $\gamma$, prices are fixed }and the platform wants to maximize the EU.

Taking Case $2$ of Model II and $\gamma=0.95$ as an example, we have $p_{\{1\}} = 1.95$, $p_{\{2\}}=1.45$, $p_{\{1,2\}}=3.4$, and $A^*=\{2\}$ is the best with $EU_{\{2\}}=(2.23-1.45)\cdot 0.95=0.74$ by Eq.~(\ref{eq:example_known_dist_max_utility_1}) and (\ref{eq:example_known_dist_max_utility_2}).
Note that the platform may further consider user 1, if user 2 rejects the offer. 
Then the overall EU with multi-batch offering would be $EU_{\{2\}}+(1-\gamma)EU_{\{1\}} > EU_{\{2\}}$.

Given $A$, $\gamma$ can also be optimized in each batch. 
Taking Case 2 of Model II as an example, when $A=\{1\}$, $EU_{\{1\}} = (2.18 - (1+\gamma))\gamma$ and $\gamma^* = 0.59$, $EU^*_{\{1\}}=0.35$.
Similarly, for $A=\{2\}$, $EU_{\{2\}}=(2.23-(0.5+\gamma))\gamma$ and $\gamma^*=0.865$, $EU^*_{\{2\}}=0.75$.
When $A=\{1,2\}$, we have $EU_{\{1,2\}}=-2.59 \gamma^2 + 2.91 \gamma$ and $\gamma^* = 0.56$, $EU_{\{1,2\}}^*=0.82$.
In this case, instead of using the same $\gamma$, the platform can also choose $\{\gamma_i\}$ for each user separately, and maximizing Eq.~(\ref{eq:example_known_dist_max_utility_2}) leads to $\gamma_1^*=0.37, \gamma_2^*=0.76$ and the resulting $EU^*_{\{1,2\}}=0.87$.

As we can see, the notation of utility accounts for  locations, data quality and sensing costs, and the notation of EU further considers possible offer rejections.
We can also see that user selection and \rev{price} determination are closely coupled in a pricing mechanism.
More will be discussed in Section~\ref{sec:pricing_mechanism}.


\section{\rev{Preliminaries and Our Model}}\label{sec:model}
In this section, we first provide background on submodularity. 
Then we present our spatial statistical model and define the metric for measuring radio mapping performance.
Finally, we present our valuation model and explore its properties.

\rev{
\subsection{Preliminaries}\label{sec:submodulartiy_background}
The submodularity property is formally defined as follows.

\begin{mydef}[Submodularity]\label{def:submodularity}
Let $\Omega$ be a finite set. 
A function $f: 2^\Omega \mapsto \mathbb{R}$ is submodular if for any $A,B \subseteq \Omega$, 
\begin{equation}
f(A) + f(B) \geq f(A\cup B) + f(A \cap B).
\end{equation}
\end{mydef}
Equivalently \cite{fujishige2005submodular}, a function $f$ is submodular if, for any $A \subseteq B \subseteq \Omega$ and any $i \in \Omega \setminus B$, 
\begin{equation}
f(A \cup \{i\}) - f(A) \geq f(B \cup \{i\}) - f(B).
\end{equation}
The notion of submodularity captures  \textit{diminishing returns} behaviors: adding a new element increases $f$ more, if there are fewer elements so far, and less, if there are more elements.

\begin{mydef}[(Approximately) monotonic function]
Let $\Omega$ be a finite set. 
A function $f: 2^\Omega \mapsto \mathbb{R}$ is said to be \textit{monotone} (or \textit{monotonic}), if $f(A \cup \{i\}) - f(A) \geq 0$ for any $A \subseteq \Omega$ and any $i \in \Omega\setminus A$; 
$f$ is said to be \textit{$\alpha$-approximately monotonic}, if $f(A \cup \{i\}) - f(A) \geq -\alpha$ for some small $\alpha > 0$, and for any $A\subseteq \Omega$ and any $i \in \Omega\setminus A$.
\end{mydef}

One of the most basic submodular maximization problems is USM, which is formally defined as follows.

\begin{mydef}[USM]
Given a nonnegative submodular fucntion $f:2^S \mapsto \mathbb{R}_+$,  $\max_{A \subseteq S} f(A)$ is called Unconstrained Submodular Maximization.
\end{mydef}

\begin{algorithm}
\caption{{\tt USM}}\label{algo:USM}
\LinesNumbered
\SetKwInOut{Input}{input}\SetKwInOut{Output}{output}

\Input{$S$ -- ground set, $f$ -- nonnegative submodular function } 
\Output{$A_n$ (or $B_n$) -- selected subset}
$A_0 \leftarrow \emptyset$, $B_0 \leftarrow S$\;
\ForEach{$i = 1$ to $n$ }{
    $a_i \leftarrow f(A_{i-1} \cup \{u_i\}) - f(A_{i-1})$\;
    $b_i \leftarrow f(B_{i-1} \setminus \{u_i\}) - f(B_{i-1})$\;
    \lIf{$a_i\geq b_i$}{
        $A_i \leftarrow A_i \cup \{u_i\}$, $B_i \leftarrow B_{i-1}$
    }
    \lElse{
       $A_i \leftarrow A_{i-1}$,  $B_i \leftarrow B_{i-1} \setminus \{u_i\}$
    }
}

\Return $A_n$ (or equivalently $B_n$)\;
\end{algorithm}

It is well known that USM is NP-hard \cite{fujishige2005submodular, feige2011Non-Monotone} and thus heuristic-based algorithms are often used to find approximate solutions.
One state-of-art \textit{linear-time} deterministic algorithm is proposed in \cite{buchbinder2015tight} and provided in Algorithm~\ref{algo:USM} for reference in the rest of this work.
It is essentially a greedy algorithm, and achieves a $1/3$-approximation, i.e., the algorithm obtains a solution $A$ with the guarantee that $f(A)\geq \frac{1}{3} f(OPT)$, where $OPT$ is the optimal solution.

}


\subsection{\rev{Spatial Statistical Model -- Gaussian Process (GP)}}\label{sec:GP_RSSI_model}
In this study, we employ GP \cite{deshpande2004model, krause2008near} (a generalization of Kriging) for radio mapping. 
Let the set of $n$ interested users be $S$, and the finely discretized AoI be $U$, where $|U| \gg |S|=n$, where $|\cdot|$ is the cardinality operator.
Define $V=S\cup U$ and each index $i \in V$ corresponds to a location $x_i$. 
\rev{Since the platform obtains \textit{noisy} RSSI measurements at $S$ and wants to estimate \textit{noiseless} front-end RSSIs at $U$, the RSSI $Z(x_i)$ or $Z_i$ is modeled as a Gaussian random variable in GP, 
}
\begin{equation}
Z(x_i) = 
\begin{cases}
\mu(x_i) + \delta(x_i),  &\mbox{for } i\in U\\
\mu(x_i) + \delta(x_i) + \epsilon_i, &\mbox{for } i \in S
\end{cases}, ~~~~~(dBm)\label{eq:GP_RSSI_model}
\end{equation}
where $\mu(x_i)$ is path-loss-impaired RSSI, $\delta(x_i)\sim N(0, \sigma^2_{x_i})$ is spatially correlated shadowing and $\epsilon_i\sim N(0, \sigma^2_{\epsilon_i})$ is hardware noise of user $i$'s device.

Define a kernel (or covariance) function $\mathcal{K}(\cdot, \cdot)$  such that $\mathcal{K}(i, j)$ is the covariance between $\delta(x_i)$ and $\delta(x_j)$.
In GP, the RSSIs at $V$ form a Gaussian random vector $Z_V=[Z(x_i)]_{i\in V}$ with a joint distribution of
\begin{equation}
f_{Z_V}(z_{V}) = \frac{1}{(2\pi)^{n/2} |\Sigma_{VV}|} e^{-\frac{1}{2} (z_{V} - \mu_{V})^T \Sigma^{-1}_{V V} (z_{V} - \mu_{V}) },\label{eq:GP_RSSI_model_dist}
\end{equation}
where $z_V=[z(x_i)]_{i\in V}$ is a realization of $Z_V$, $\mu_V=[\mu(x_i)]_{i \in V}$ is the mean vector and $\Sigma_{V V}$ is the covariance matrix.
For any pair of indices $i,j \in V$, their covariance $\sigma_{ij}$ is the $(i,j)$-th entry of $\Sigma_{VV}$, which is given by
\begin{equation}
\sigma_{ij} = 
\begin{cases}
\mathcal{K}(i,j), &\mbox{if } i \neq j \\
\mathcal{K}(i,j) \mbox{ or } \sigma^2_{x_i}, &\mbox{if } i=j \in U\\
\mathcal{K}(i,j) + \sigma^2_{\epsilon_i} \mbox{ or } \sigma^2_{x_i} + \sigma^2_{\epsilon_i}, &\mbox{if } i=j \in S
\end{cases}
\label{eq:covariance}
\end{equation}

Given a set of measurements $Z_A$ where $A\subseteq S$, $Z(x_i)$ is a conditional Gaussian random variable with a mean $\mu_{Z(x_i)|Z_A}$ (or simply $\mu_{i|A}$) and a variance of $\sigma^2_{Z(x_i)|Z_A}$ (or simply $\sigma^2_{i|A}$),
\begin{align}  \label{eq:cond_mean}
\mu_{i|A} &= \mu(x_i) + \Sigma_{Ai}^T \Sigma^{-1}_{AA} (z_{A} - \mu_{A}), \\ 
\sigma^2_{i | A} &= \sigma_{ii} - \Sigma_{Ai}^T \Sigma^{-1}_{ A A}\Sigma_{A i}. \label{eq:cond_var}
\end{align}
Note that the posterior variance in Eq.~(\ref{eq:cond_var}) only depends on $\Sigma_{V V}$, not the actual measured values $z_A$. 

Estimating $\mathcal{K}(\cdot, \cdot)$ can be difficult in practice, and it is often assumed that $\mathcal{K}(\cdot, \cdot)$ is stationary (i.e., a function of location displacement) and isotropic (i.e., a function of distance).
In other words, $\mathcal{K}(i, j) = \mathcal{K}_\theta(||x_i - x_j||)$, where $\theta$ is a set of parameters. 
That being said, our following discussions do not assume stationarity or isotropy, and thus can be applied to general kernel functions.
But we do assume both mean and kernel functions have been estimated from previous measurements\footnote{In \cite{phillips2012practical}, authors used a predictive (empirical) path loss model to estimate the mean process $\mu(x)$. This procedure is called \textit{detrending}. In the same paper as well as \cite{ying2015revisiting}, authors estimated an empirical semivarigram $\gamma(\cdot)$ (isotropic and stationary) from real measurements and fitted it with parametric models. The relationship between $\gamma(\cdot)$ and $\mathcal{K}(\cdot, \cdot)$ is $\mathcal{K}(i, j)=c_0-\gamma(||x_i-x_j||)$ for $i \neq j$, where $c_0$ is some constant. } and available in the current period. 

\subsection{\rev{Mutual Information (MI) for Uncertainty Reduction}}\label{sec:sampling_design_and_submodularity}
To measure radio mapping performance, we adopt the MI metric \cite{krause2008near}, which is defined as follows,
\begin{equation}
MI(A) = I(Z_A; Z_{V \setminus A}) = H(Z_{V\setminus A}) - H(Z_{V\setminus A}|Z_A),\label{eq:GP_RSSI_model_MI}
\end{equation}
which is the amount of uncertainty reduction about RSSIs at unmeasured locations given measurements at $A$. 

\rev{Note that the platform is interested in $Z_{V\setminus A}$, which includes RSSIs at $S\setminus A$ (i.e., locations with confirmed user presence) and $U$ (i.e., locations with possible user presence)}.
As implicitly assumed in \cite{krause2008near}, $Z(x_i)$ includes noise for $i\in S\setminus A$ in the definition of MI, which is not a big issue, since noise is relatively small.
Compared to the entropy criterion $H(Z_{V\setminus A} | Z_A)$, MI tends to not select users along the boundaries and avoids the ``waste'' of information.

Denote by $MI(i|A)$ the marginal MI of an additional user $i\in S\setminus A$ given $A$.
It is given by
\begin{align}\label{eq:comp_marginal_MI}
MI(i|A) &= MI(\{i\}\cup A) - MI(A)\\
&= H(Z_i| Z_A) - H(Z_i| Z_{V\setminus(A \cup \{i\})}),
\end{align}
where $H(Z_i| Z_A)=\frac{1}{2}\log(2\pi e \sigma^2_{i| A})$ is the conditional entropy, and it can be easily computed from Eq.~(\ref{eq:cond_var}). 

It has been shown in \cite{krause2008near} that $MI(A)$ is both  submodular and $\alpha$-approximately monotone\footnote{\rev{Intuitively, $MI(A)$ is monotone under the condition that $|V|\gg |S|\geq |A|$ and thus adding one more user increases the MI.
Otherwise, consider the example that $S=\{1,2\}$ and $|V|=|S|$, then $MI(\emptyset)=MI(\{1,2\})=0$ but $MI(\{1\})>0$ and $MI(\{2\})>0$, which means that MI first increases then decreases as more users are selected.
Its monotonicity is approximate due to the extreme case where there exist two (or more) users arbitrarily close to each other.
If one is selected, selecting the other one will decrease MI. 
More discussions are available in \cite{krause2008near}.
In practice, the platform can avoid such extreme cases by considering only one of them.
Also, a pricing mechanism that maximizes the (expected) utility should not select both, since the second user is not beneficial for radio mapping and not free-of-charge.} \label{footnote:approximate_monotone}
}.
For any $\alpha>0$, a discretization level exists so that $MI(A)$ is approximately monotone.

\subsection{Valuation Function}\label{sec:valuation_model}
We consider the following valuation function $v: 2^S \mapsto \mathbb{R}_+$ for the platform,
\begin{equation}
v(A) = \kappa \cdot \log (1 + MI'(A)),\label{eq:valuation_function}
\end{equation}
where $\kappa> 0$ is a constant and $MI'(A)=MI(A) + \alpha |A|$.
\rev{Intuitively, $\kappa$ is the currency that reflects the platform's preference over per unit  MI (in log scale).
Commonly used in economics, $\log(\cdot)$ further emphasizes the diminishing returns behavior.
We introduce $\alpha |A|$ to ignore the extreme case where some users are arbitrarily close to each other, which rarely occurs and/or can be avoided in practice (see Footnote \ref{footnote:approximate_monotone}). 
}

We show that there exists useful structural properties like submodularity and monotonicity in $v(A)$.

\begin{mylemma}\label{lemma:v_submodular}
The valuation function $v(\cdot)$ in Eq.~(\ref{eq:valuation_function}) is monotone submodular.
\end{mylemma}
\begin{proof}
See Appendix~\ref{proof:v_submodular} for proof.
\end{proof}


\section{Pricing Mechanism}\label{sec:pricing_mechanism}
\rev{In this section, we formulate pricing mechanism design as expected utility (EU) maximization and propose two schemes: (1) sequential offering and (2) batched offering.}

\subsection{EU Maximization}\label{sec:EU_maximization}
Given $S$, $v(\cdot)$ and $\{F_{C_i}(c_i)\}$, the platform wants to determine a set of offers $(A, {\bf p})$, where $A \subseteq S$ are selected users and ${\bf p}=[p_i]_{i \in A}$ is the corresponding price vector.
Let the decision of the $i$-the selected user be $X_i$, which is given by 
\begin{equation}
X_i=
\begin{cases}
1, & \mbox{if } c_i \leq p_i \mbox{ (i.e., offer is accepted)} \\
0, & \mbox{else (i.e., offer is rejected)}
\end{cases}. \label{eq:RV_user_decision}
\end{equation}
It is a Bernoulli random variable (from the platform's perspective), and $\text{Pr}(X_i=1)=\int_{\underline{c}_i}^{p_i} f_{C_i}(c_i) dc_i = F_{C_i}(p_i)$. 

\rev{
As mentioned in Section~\ref{sec:sys_arch}, an offer may be expired, and this event is modeled by a random variable $X_i'$, i.e.,
\begin{equation}
X_i' = 
\begin{cases}
1, & \mbox{if offer is unexpired}\\
0, & \mbox{if offer is expired}
\end{cases},
\end{equation}
where $\rho_i = \text{Pr}(X_i'=1)$ is the probability of an unexpired offer.
We assume that the platform can estimate $\rho_i$ and that $X'_i$ is independent of $X_i$.
}

\rev{
Let $Y_i$ be a random variable that represents whether a user is successfully \textit{recruited} (i.e., offer is unexpired and accepted),
\begin{equation}
Y_i = 
\begin{cases}
1, & \mbox{if offer is unexpired AND accepted}\\
0, & \mbox{if offer is expired OR rejected}
\end{cases},
\end{equation}
where 
\begin{align}
\gamma_i &= \text{Pr}(Y_i = 1) = \text{Pr}(X_i'=1, X_i=1) \nonumber \\
&= \text{Pr}(X_i'=1) \cdot \text{Pr}(X_i = 1 | X_i' = 1) \nonumber \\
&= \rho_i \cdot F_{C_i}(p_i) \in [0, \rho_i],\label{eq:prob_of_recruited}
\end{align}
is the probability that the $i$-th selected user is recruited.
}

Define ${\bf Y} = [Y_i]_{i \in A}$ and let ${\bf y}$ be the realization of ${\bf Y}$.
Then $A_{\bf y} \subseteq A$ is the set of recruited users.
Then the EU is given by
\begin{align}
EU(A, {\bf p}) = \mathbb{E}_{\bf Y} [u(A_{\bf y}, {\bf p})]  = \sum_{{\bf y}} \text{Pr}(A_{\bf y}, {\bf p}) u(A_{\bf y}, {\bf p}),\label{eq:EU_def}
\end{align}
where 
\begin{align}
\text{Pr}(A_{\bf y}, {\bf p})&=\prod_{i\in A_{\bf y}} \gamma_i \cdot \prod_{i\notin A_{\bf y}} (1- \gamma_i), \\
u(A_{\bf y}, {\bf p}) &= v(A_{\bf y}) - \sum_{i\in A_{\bf y}} p_i,
\end{align}
\normalsize
are the probability and utility of $A_{\bf y}$ given ${\bf p}$, respectively.

The goal of the platform is to design a \textit{pricing mechanism} based on \textit{EU maximization}, that is,
\begin{equation}
\max_{A\subseteq S, {\bf p}} EU(A, {\bf p}) \label{eq:EU_maximization}.
\end{equation}
In this sense, a pricing mechanism consists of a \textit{selection rule} and a \textit{pricing rule}, which is joint optimization in the discrete domain of $A$ and the continuous domain of ${\bf p}$.

\subsection{Sequential Offering}\label{sec:sequential_offering}
\rev{We first consider a special case of EU maximization, where $|A|=1$.
That is, the platform only selects one best user with its best offer in each round, and waits for its decision before making the next offer.}
We call it \textit{sequential (individual) offering}.
Formally, the task in each round is 
\begin{align}
\max_{i \in S\setminus A, p_i} EU(A \cup \{i\}, [{\bf p},~p_i] | {\bf Y} = {\bf y}) 
\label{eq:sequential_offering},
\end{align}
where ${\bf y}$ represents the outcomes of offers that have been sent so far and is known to the platform.


\begin{algorithm}
\caption{{\tt Sequential\_Offering}}
\label{algo:sequential_offering}
\LinesNumbered
\SetKwInOut{Input}{input}\SetKwInOut{Output}{output}

\Input{$S$ -- set of users, $v(\cdot)$ -- valuation function, $\{F_{C_i}(\cdot)\}$ -- cost distributions, $\{\rho_i\}$ -- probabilities of  unexpired offers, $\tau$ -- threshold
} 
\Output{$A$ -- selected users, ${\bf p}$ -- prices, ${\bf y}$ -- outcomes}
$A\leftarrow \emptyset$, ${\bf p} \leftarrow NULL$, ${\bf y}\leftarrow NULL$\;
\While{$A \neq S$}{
    \ForEach{each user $i$ in $S\setminus A$}{
        $p_i^* \leftarrow \arg\max_{p_i\in [\underline{c}_i, \bar{c}_i]} [v(i|A_{\bf y}) - p_i] \cdot F_{C_i}(p_i)$\;
        $EU_i \leftarrow [v(i|A_{\bf y}) - p^*_i] \cdot \rho_i \cdot F_{C_i}(p^*_i)$
    }
    $i^* \leftarrow \arg \max_{i\in S \setminus A} EU_i$\;
    \While{$EU_{i^*}> \tau$}{
        Send the offer $(i^*, p_{i^*}^*)$ and observe $y_{i^*}$\;
        $A\leftarrow A\cup\{i^*\}$, ${\bf p}\leftarrow[{\bf p}, p_{i^*}^*]$, ${\bf y}\leftarrow [{\bf y}, y_{i^*}]$\;
        \lIf{$y_{i^*}=1$}{\textbf{break}}
        \lElse{
            $i^* \leftarrow \arg \max_{i\in S \setminus A} EU_i$ 
        }
    }
}

\Return $A$, ${\bf p}$, ${\bf y}$\;
\end{algorithm}

The algorithm for sequential offering is described in Algorithm~\ref{algo:sequential_offering}.
\revv{The idea is as follows: The platform first determines an optimum price $p_i^*$ tailored to each $i$ that maximizes $EU_i$. 
Then it picks the index $i^*$ that maximizes among the $EU_i$, and offers to that user the corresponding $p_{i^*}^*$.}

\subsubsection{Price Determination (Lines 3-5)}
Depending on whether user $i \in S\setminus A$ is successfully recruited, the utility is
\begin{align}
&~~~~u(A_{\bf y} \cup \{i\}_{Y_i}, [{\bf p},~p_i]) \nonumber \\ 
&= u(A_{\bf y}, {\bf p}) + 
\begin{cases}
v(i|A_{\bf y})-p_i, &\mbox{if } Y_i=1 \\
0, & \mbox{otherwise},
\end{cases}
\end{align}
\normalsize
where $v(i|A_{\bf y})=v(\{i\}\cup A_{\bf y}) - v(A_{\bf y})$ is the marginal value of $i$ given $A_{\bf y}$.
The task is to find
\begin{align}
p^*_i 
&=\arg \max_{p_i \in [\underline{c}_i, \bar{c}_i] } \mathbb{E}_{Y_i}[u(A_{\bf y} \cup \{i\}_{Y_i}, [{\bf p},~p_i])] \nonumber \\
&= \arg \max_{p_i \in [\underline{c}_i, \bar{c}_i] } [v(i|A_{\bf y}) - p_i]\cdot \text{Pr}(Y_i=1) \nonumber \\
&= \arg \max_{p_i \in [\underline{c}_i, \bar{c}_i] } [v(i|A_{\bf y}) - p_i] \cdot F_{C_i}(p_i) \label{eq:sequential_offering_best_offer}.
\end{align}
Note that $\rho_i$ in $\text{Pr}(Y_i=1)$ in Eq.~(\ref{eq:prob_of_recruited}) \revv{is a constant and does not impact the choice of} $p_i^*$.
If $f_{C_i}(c_i) = F'_{C_i}(c_i)$ is differentiable and non-increasing,  the objective function in Eq.~(\ref{eq:sequential_offering_best_offer}) will be concave in $p_i$, and $p_i^*$ can be obtained with efficient algorithms (e.g., gradient descent).
\rev{If $F_{C_i}(c_i)$ is twice continuously differentiable, techniques like interval analysis may be used to find $p^*$ \cite{hansen1979global}. 
}

\subsubsection{User Selection (Line 6)}
The best user $i^*$ that maximizes the EU is found, 
\begin{equation}
i^* = \arg \max_{i \in S\setminus A} [v(i|A_{\bf y}) - p^*_i]\cdot \rho_i \cdot F_{C_i}(p^*_i).
\end{equation}
Note that the above selection also takes $\rho_i$ into account.
If the user is recruited (Lines 10), the algorithm will go to Line 3 to recompute best prices for remaining users; otherwise, it sends out the next best offer immediately until one is accepted (Lines 7-11). 
To enable fast convergence, the platform can set a minimum threshold $\tau>0$ (e.g., 0.01) for the marginal EU (Line 7). 
The platform stops making offers when there are (1) no remaining users or (2) none of the remaining users leads to a non-trivial marginal EU.

\subsubsection{Complexity Analysis}
If we assume $\mathcal{O}(1)$ for computing the best price for a single user, the overall computational complexity of Algorithm~\ref{algo:sequential_offering} is $\mathcal{O}(n^2)$, since it takes $\mathcal{O}(n)$ to compute best prices for all remaining users and may select up to $n$ users in the worst case.
The inner while-loop does not require re-computation of best prices and is dominated by the for-loop. 
Note that $\mathcal{O}(n^2)$ is very conservative, since the algorithm may stop much earlier based on the configuration.

\subsection{Batched Offering}
\rev{
As we can see, sequential offering is intuitive and straightforward, but its main drawback is the (possibly) large delay accumulated over multiple rounds of offering.
Hence, a natural generalization is to make multiple offers (i.e., a batch) in each round and continue offering for multiple rounds.}
We refer to it as \textit{(sequential) batched offering}.

In batched offering, the platform is faced with the general case of EU maximization in Eq.~(\ref{eq:EU_maximization}) in each round. 
\rev{
    Unfortunately, joint optimization can be difficult in practice, mainly because $EU(A, {\bf p})$ is a multi-variate function in the continuous domain of ${\bf p}$ given $A$, and there may not exist structural properties like concavity in general to enable efficient computation of the global optimum. 
    Exhaustive search is prohibitive as the space of ${\bf p}$ is huge.
    Fortunately, $EU(A, {\bf p})$ has a useful structural property (i.e., submodularity) in the discrete domain of $A$ as a set function, which inspires our following pricing mechanism design.
}

\begin{mylemma}\label{lemma:EU_submodular}
Given ${\bf p}$, $EU(A, {\bf p})$ is submodular in $A$.
\end{mylemma}
\begin{proof}
See Appendix~\ref{proof:EU_submodular} for proof.
\end{proof}

The basic idea of our pricing mechanism is to first fix the pricing rule and then focus on user selection to exploit the submodularity property.
As mentioned in Section~\ref{sec:submodulartiy_background}, if a set function $f$ is nonnegative submodular and the problem is $\max_{A\subseteq S} f(A)$, there exist heuristic-based algorithms (e.g., Algorithm~\ref{algo:USM}) that provide  solutions with performance guarantee at low complexity.
Next, we will present our pricing mechanisms for single-batch and multi-batch offering.



\subsubsection{Single-Batch Offering}
As mentioned in Section~\ref{sec:example}, we consider the following pricing rule in this work: the platform chooses a desired probability of recruitment $\gamma \in (0,1]$ such that \rev{$\gamma_i = \min(\gamma, \rho_i)$} for any $i \in S$ and determines corresponding prices, i.e., 
\begin{equation}
p_{\gamma}(A)=\sum_{i\in A} p_{\gamma}(\{i\}) = \sum_{i\in A} F_{C_i}^{-1}\left( \min (\gamma/\rho_i, 1 ) \right). \label{eq:pricing_rule}
\end{equation}
Given $p_{\gamma}(\cdot)$, user selection then becomes
\begin{equation}\label{eq:EU_gamma_A}
\max_{A\subseteq S} EU_{\gamma}(A) =  \max_{A\subseteq S} \sum_{{\bf y}} \text{Pr}_{\gamma}(A_{\bf y}) u_{\gamma}(A_{\bf y}),
\end{equation}
where 
$\text{Pr}_{\gamma}(A_{\bf y}) = \prod_{i \in A_{\bf y}} \gamma_i \cdot \prod_{i \notin A_{\bf y}} (1-\gamma_i)$ and $u_{\gamma}(A_{\bf y}) = v(A_{\bf y}) - p_{\gamma}(A_{\bf y})$.
By Lemma~\ref{lemma:EU_submodular}, $EU_{\gamma}(A)$ is submodular.

However, the USM formulation also requires the objective function to be nonnegative, but $EU_{\gamma}(A)$ can be negative. 
To bypass this issue, one straightforward way is to define 
\begin{equation}
EU'_{\gamma}(A)=EU_{\gamma}(A) + p_0
\end{equation}
where 
\begin{equation}
p_0 = \sum_{i \in S} \gamma_i p_\gamma(\{i\})
\end{equation} 
is a constant that represents the maximum expected price, and adding a constant preserves submodularity.
It is easy to see that $EU'_{\gamma}(A)$ is both submodular and nonnegative, and $\max_{A \subseteq S} EU_\gamma'(A)$ is equivalent to $\max_{A \subseteq S} EU_\gamma'(A)$.

\rev{Another issue is due to the difficulty of analytically evaluating $EU_{\gamma}(A)$ (or $EU'_\gamma(A)$), since it involves an exponentially growing number of terms due to the summation in Eq.~(\ref{eq:EU_gamma_A}).
In practice, the Monte-Carlo (MC) method \cite{rubinstein2016simulation} can be used to obtain estimates of $\hat{EU_\gamma}(A)$ as well as $\hat{EU_\gamma'}(A)$ (by adding the constant $p_0$ to $\hat{EU_\gamma}(A)$).
We show that {\tt USM}($S$, $\hat{EU_\gamma'}$) has the following performance.}

\rev{\begin{mytheorem}\label{theorem:USM_with_approx_EU_bound}
If $|\hat{EU_\gamma'}(A)-EU_\gamma'(A)| \leq \epsilon$ for some small $\epsilon>0$ for any $A\subseteq S$, {\tt USM}($S$, $\hat{EU_\gamma'}$) (or equivalently {\tt USM}($S$, $\hat{EU_\gamma}$)) returns a solution $A$ with the following performance,
\begin{align}
EU_\gamma'(A) & \geq \frac{1}{3} EU_\gamma'(OPT) - \frac{1}{3}(2n + 2) \epsilon,  \text{ and } \\
EU_\gamma(A) & \geq \frac{1}{3} EU_\gamma(OPT) - \frac{2}{3} p_0 - \frac{1}{3}(2n + 2) \epsilon, 
\end{align}
where $OPT$ is the optimal solution for $\max_{A \subseteq S} EU_\gamma'(A)$ as well as $\max_{A \subseteq S} EU_\gamma(A)$, $p_0 = \sum_{i \in S} \gamma_i p_\gamma(\{i\})$ and $n = |S|$.
\end{mytheorem}
\begin{proof}
See Appendix~\ref{proof:USM_with_approx_EU_bound} for proof.
\end{proof}}

\begin{algorithm}
\caption{{\tt Single\_Batch\_Offering}}
\label{algo:single_batch_offering}
\LinesNumbered
\SetKwInOut{Input}{input}\SetKwInOut{Output}{output}

\Input{$S$ -- set of users, $f$ -- $\hat{EU'_{\gamma}}(\cdot)$ (or equivalently $\hat{EU}_\gamma(\cdot)$), 
$\Gamma=[\gamma_1,...,\gamma_l]$ -- candidate $\gamma$ values where $\gamma_1 < \gamma_2 <...<\gamma_l$, $p_\gamma(\cdot)$ -- pricing rule
} 
\Output{$A$ -- selected users, ${\bf p}$ -- prices, $\gamma^*$ -- best probability of recruitment}

$A \leftarrow \emptyset$, ${\bf p}\leftarrow NULL$, $\gamma^*\leftarrow 0$\;
\ForEach{$\gamma$ in $[\gamma_1,...,\gamma_l]$}{
	$A_\gamma \leftarrow$ {\tt USM}($S, f$)\; 
	\lIf{$A_\gamma = \emptyset$}{{\bf break}}
	\If{$\hat{EU}_{\gamma}(A_\gamma) > \hat{EU}_{\gamma}(A)$}{
	$A \leftarrow A_\gamma$, $\gamma^* \leftarrow \gamma$\;
	}
}

\ForEach{each user $i$ in $A$}{${\bf p} \leftarrow [{\bf p}, p_{\gamma^*}(\{i\})]$}

\Return $A$, ${\bf p}$, $\gamma^*$\;
\end{algorithm}

Algorithm~\ref{algo:single_batch_offering} describes the algorithm for single-batch offering. 
Since each $\gamma$ leads to a different solution $A_\gamma$, it is better for the platform to search through a list of $l$ candidate $\gamma$ values (e.g. $\{0.1, 0.2, ..., 1.0\}$) to find the best one that maximizes $\hat{EU}_{\gamma} (A_\gamma)$.
As {\tt USM} takes $\mathcal{O}(n)$ time, the overall complexity of Algorithm~\ref{algo:single_batch_offering} is $\mathcal{O}(ln)$.
%
%

\subsubsection{Multi-Batch Offering}
In the case of expired or rejected offers in the previous batch, the platform may send out more batches until the next batch is no longer profitable.

Denote by $EU_{\gamma}(B|A_{\bf y})$ the marginal EU of additional offers $B$ conditioned on the set of recruited users $A_{\bf y}$, 
\begin{align}
EU_{\gamma}(B | A_{\bf y})&=\mathbb{E}_{\bf Y'} [u_{\gamma}(B_{\bf y'}|A_{\bf y})],
\end{align}
where $u_{\gamma}(B_{\bf y'}|A_{\bf y}) = u_{\gamma}(B_{\bf y'}\cup A_{\bf y}) - u_{\gamma}(A_{\bf y}) = v(B_{\bf y'}|A_{\bf y})-p_{\gamma}(B_{\bf y'})$ is the marginal utility of $B_{\bf y'}$ given $A_{\bf y}$.
We can see from Lemma~\ref{lemma:EU_submodular}  that $EU_{\gamma}(B|A_{\bf y})$ is again submodular in $B$.

\begin{algorithm}
\caption{{\tt Multi\_Batch\_Offering}}
\label{algo:multi_batch_offering}
\LinesNumbered
\SetKwInOut{Input}{input}\SetKwInOut{Output}{output}

\Input{$S$ -- set of users, $f$ -- $\hat{EU'}_{\gamma}(\cdot)$ (or equivalently $\hat{EU}_\gamma(\cdot)$), $\Gamma=[\gamma_1,...,\gamma_l]$ -- candidate $\gamma$ values where $\gamma_1 < \gamma_2 <...<\gamma_l$, $p_\gamma(\cdot)$ -- pricing rule, $\tau$ -- threshold} 
\Output{$A$ -- selected users, ${\bf p}$ -- prices, ${\bf y}$ -- outcomes}
$A \leftarrow \emptyset$, ${\bf p} \leftarrow NULL$, ${\bf y}\leftarrow NULL$ \;
\While{$A \neq S$}{
    $(B, {\bf p}_B, \gamma^*)$ $\leftarrow${\tt Single\_Batch\_Offering}($S \setminus A, f(\cdot|A_{\bf y}), \Gamma, p_\gamma(\cdot)$) \;
    \If{$\hat{EU}_{\gamma^*}(B|A_{\bf y}) > \tau$}{
        Send out offers $(B, {\bf p}_B)$ and observe  ${\bf y}_B$\;
        $A \leftarrow A\cup B$, ${\bf p}\leftarrow [{\bf p}, {\bf p}_B]$, ${\bf y}\leftarrow[{\bf y}, {\bf y}_B]$
    }
    \lElse{ \textbf{break}}
}

\Return $\mathcal{A}$, ${\bf p}$, ${\bf y}$\;
\end{algorithm}

The algorithm for multi-batch offering is provided in Algorithm~\ref{algo:multi_batch_offering}.
Starting from the second batch, the marginal EU function is passed as an input to  {\tt Single\_Batch\_Offering} (Line 3).
If the (estimated) marginal EU of $B$ is higher than a preset threshold $\tau>0$ (e.g., 0.01), then it is profitable on average to send out the next batch of offers (Lines 5).
The platform will then wait for the results and update $A$, ${\bf p}$, ${\bf y}$ accordingly (Line 6).
The offering process stops when (1) there are no more users to consider (Line 2), or (2) the next batch is no longer profitable on average (Line 4). 

\section{Evaluation}\label{sec:evaluation}
In this section, we conduct simulations to evaluate proposed pricing mechanisms based on EU maximization, and compare them against baseline mechanisms based on (best-case) utility maximization.
We also study the impact of offer expiration on the proposed pricing mechanisms. 

\subsection{Simulation Setup}
Fig.~\ref{fig:60_users_poisson_topo} is a sample topology of 60 users, whose locations are randomly generated from the spatial Poisson process. 
The AoI is a 6km-by-6km region,
discretized into a total of 169 points with a resolution of 450 meters. 
We assume an exponential kernel function $\mathcal{K}(d) = 15.5\cdot \exp(-\frac{d}{0.7})$, as shown in Fig.~\ref{fig:cov_func}, which is adapted from the semivariogram fitted from real measurements \cite{ying2015revisiting}. 
Given the above settings, negative marginal MI values are not observed and $\alpha$ is set to $0$.

The domain of $C_i$ is $[\underline{c}_i, \underline{c}_i + \Delta c]$, where $\Delta c > 0$ and $\underline{c}_i$ is randomly generated from $U[0.1, 0.2]$, where $U[\cdot, \cdot]$ denotes the uniform distribution.
We consider two types of distributions: uniform (UN) and truncated normal (TN) distributions. 
TN is the normal distribution $N(\underline{c}_i, (\Delta c/3)^2)$ truncated to $[\underline{c}_i, \underline{c}_i + \Delta c]$.
Compared to UN, TN represents the situation where the majority of users have sensing costs closer to the energy costs $\underline{c}_i$, despite of opportunity costs.
The same set of noise variances independently drawn from $U[0.5,1]$ is used throughout our simulation.

\begin{figure}[ht!]
        \centering
        \begin{subfigure}[b]{0.45\columnwidth}
                \includegraphics[width=1\textwidth]{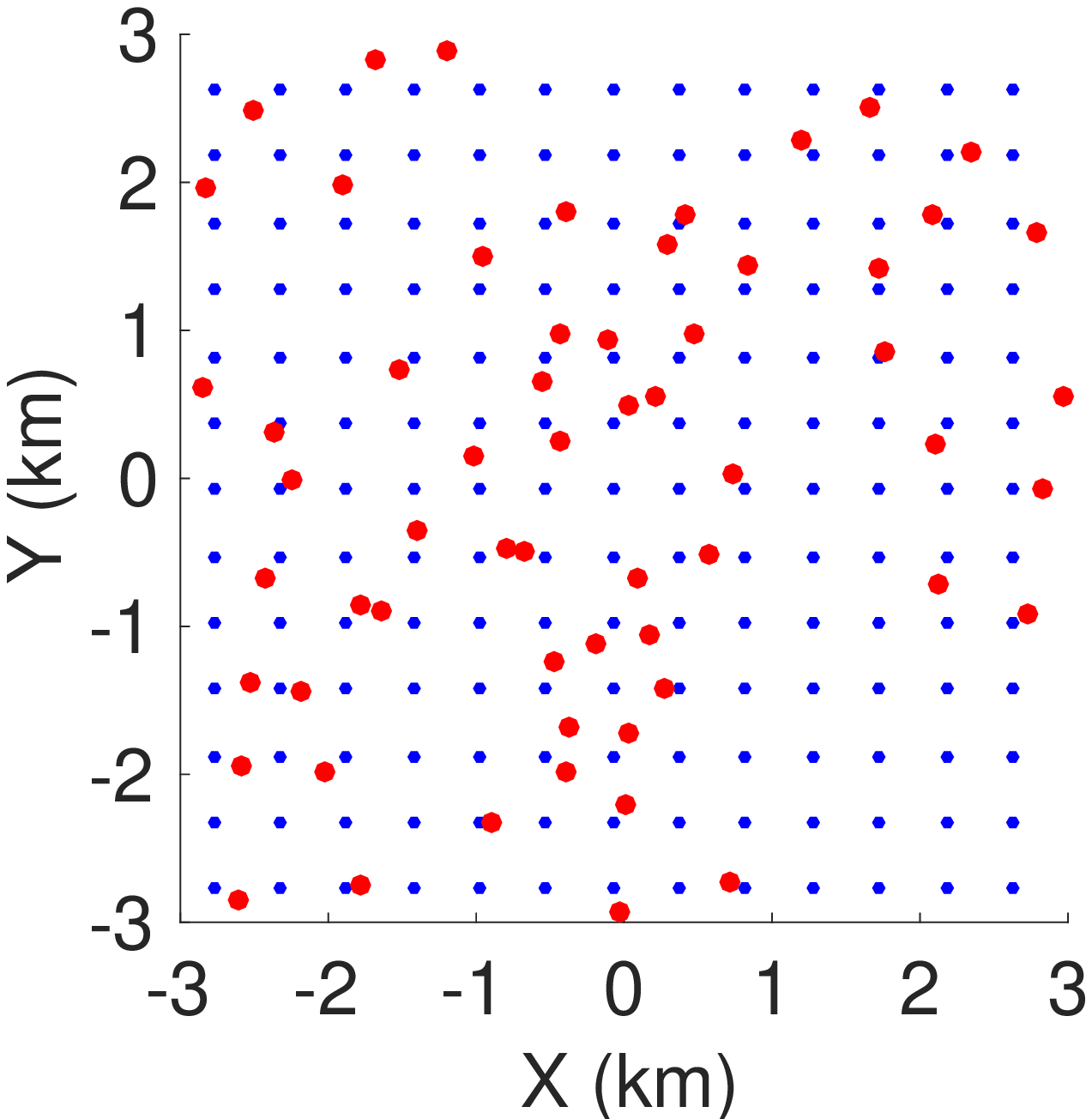}
                \caption{Sample topology}
                \label{fig:60_users_poisson_topo}
        \end{subfigure}
        \begin{subfigure}[b]{0.45\columnwidth}
                \includegraphics[width=1\textwidth]{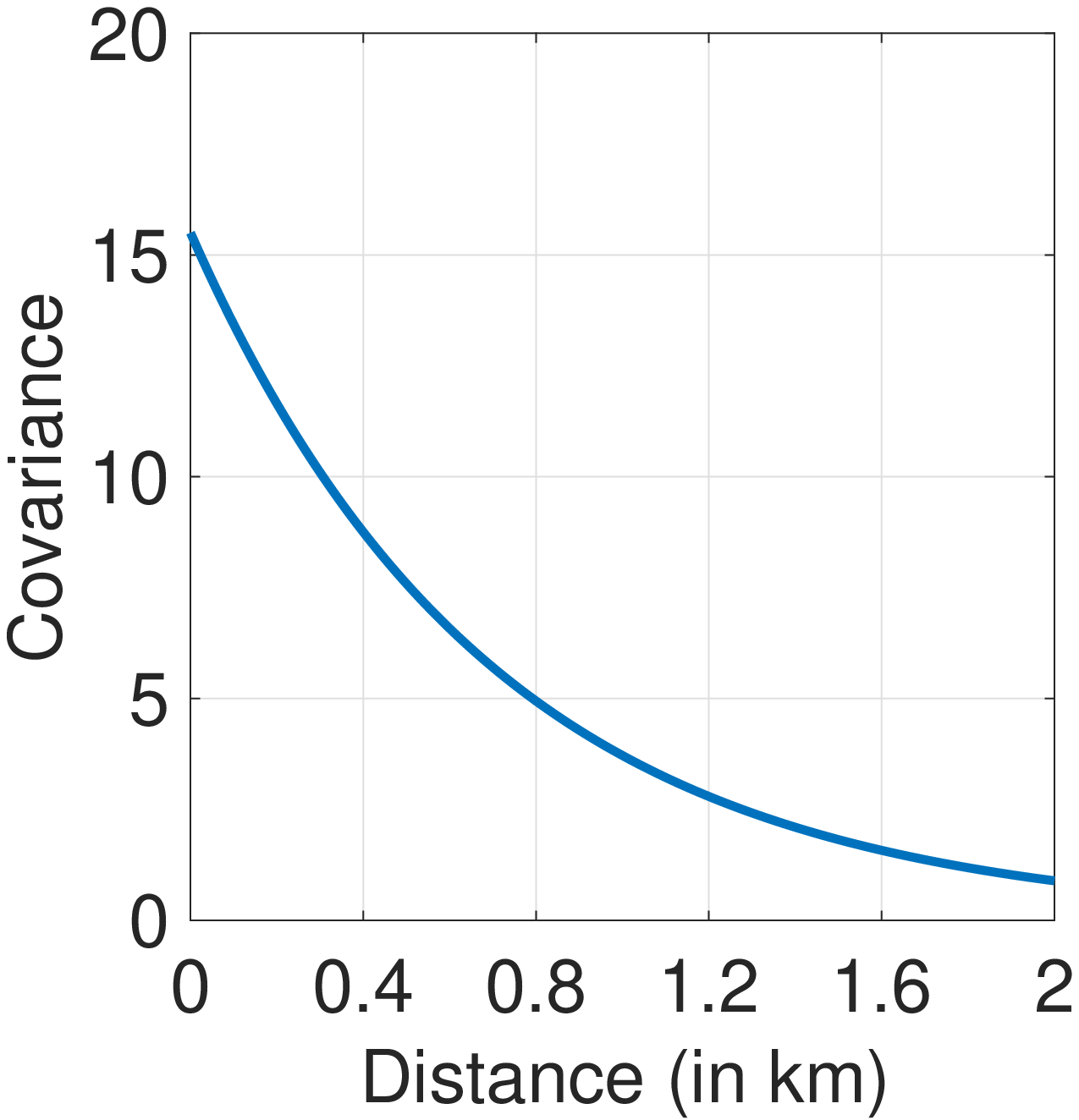}
                \caption{Kernel function}
                \label{fig:cov_func}
        \end{subfigure}
        \caption{(a) Sample topology of 60 users (in red) in a 6km-by-6km area that is discretized into a mesh grid of 169 points (in blue). (b) The kernel function $\mathcal{K}(d) = 15.5\cdot \exp(-\frac{d}{0.7})$.}
        \label{fig:simu_topology_and_cov_func}
        \vspace{-0.3cm}
\end{figure}

\subsubsection{Baseline Mechanisms}
As mentioned in Section~\ref{sec:example}, an alternative  to design a pricing mechanism is to maximize the best-case utility as in Eq.~(\ref{eq:goal_max_utility}), assuming no expired or rejected offers.
In this simulation, we consider single-batch and multi-batch offering based on utility maximization as baseline mechanisms.
That is, instead of passing $\hat{EU_\gamma'}(\cdot)$ into {\tt USM} in each batch (Line 3 of Algorithm~\ref{algo:single_batch_offering}), we pass $u'_\gamma(\cdot)$ (a nonnegative submodular function) into {\tt USM}, where $u'_\gamma(A)=u_\gamma(A)+p_\gamma(S)$ and $u_\gamma(A) = v(A)-p_\gamma(A)$.
Compared to $EU'_\gamma(\cdot)$, the objective function $u'_\gamma(\cdot)$ does not require the MC method and is easy to evaluate. 

For convenience, we refer to EU-maximization-based mechanisms by feeding $\hat{EU'_\gamma}(\cdot)$ to {\tt USM} as USM-EU, and baseline mechanisms by feeding $u'_\gamma(\cdot)$ to {\tt USM} as USM-u.

\subsection{USM-EU vs. USM-u in Single-Batch Offering}
In this experiment, we compare the performance of USM-EU and USM-u in singe-batch offering.
$\hat{EU_\gamma'}(\cdot)$ is obtained by averaging over $50$ iterations of MC simulations.
We randomly select $30$ or $60$ users, and set $\kappa$ in $v(\cdot)$ (Eq.~(\ref{eq:valuation_function})) to $4$ or $8$, and $\Delta c$ to $0.1$ or $0.5$.  
\rev{We assume no expired offers and set $\rho_i=1$ for each user $i$.
We will study the impact of offer expiration later in Section~\ref{sec:impact_of_rho}.} 
A total of $30$ iterations are conducted, and a different seed is used for generating users and cost distributions in each iteration.
In the $i$-th iteration, however, the same set of users and cost distributions are used across different $\gamma$ and mechanisms for fair comparison.
Results are provided in Fig.~\ref{fig:USM_u_vs_USM_EU_avg}.

\begin{figure}[h!]
        \centering
        \begin{subfigure}[b]{0.45\columnwidth}
                \includegraphics[width=1\textwidth]{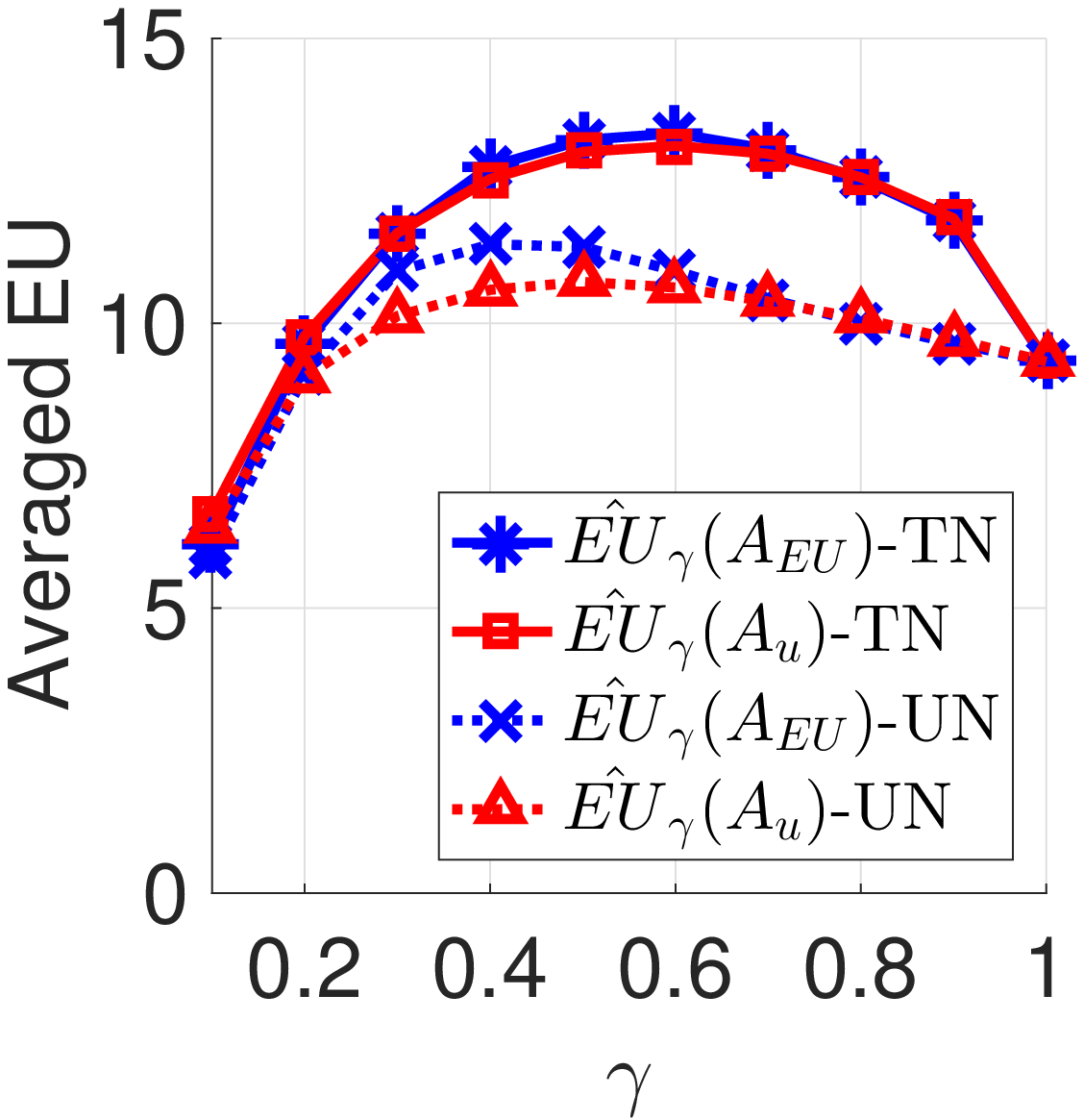}
                \caption{$n=30$, $\kappa=8$, $\Delta c = 0.5$}
                \label{fig:avg_30_users_kappa_8_delta_0_5}
        \end{subfigure}
        \begin{subfigure}[b]{0.45\columnwidth}
                \includegraphics[width=1\textwidth]{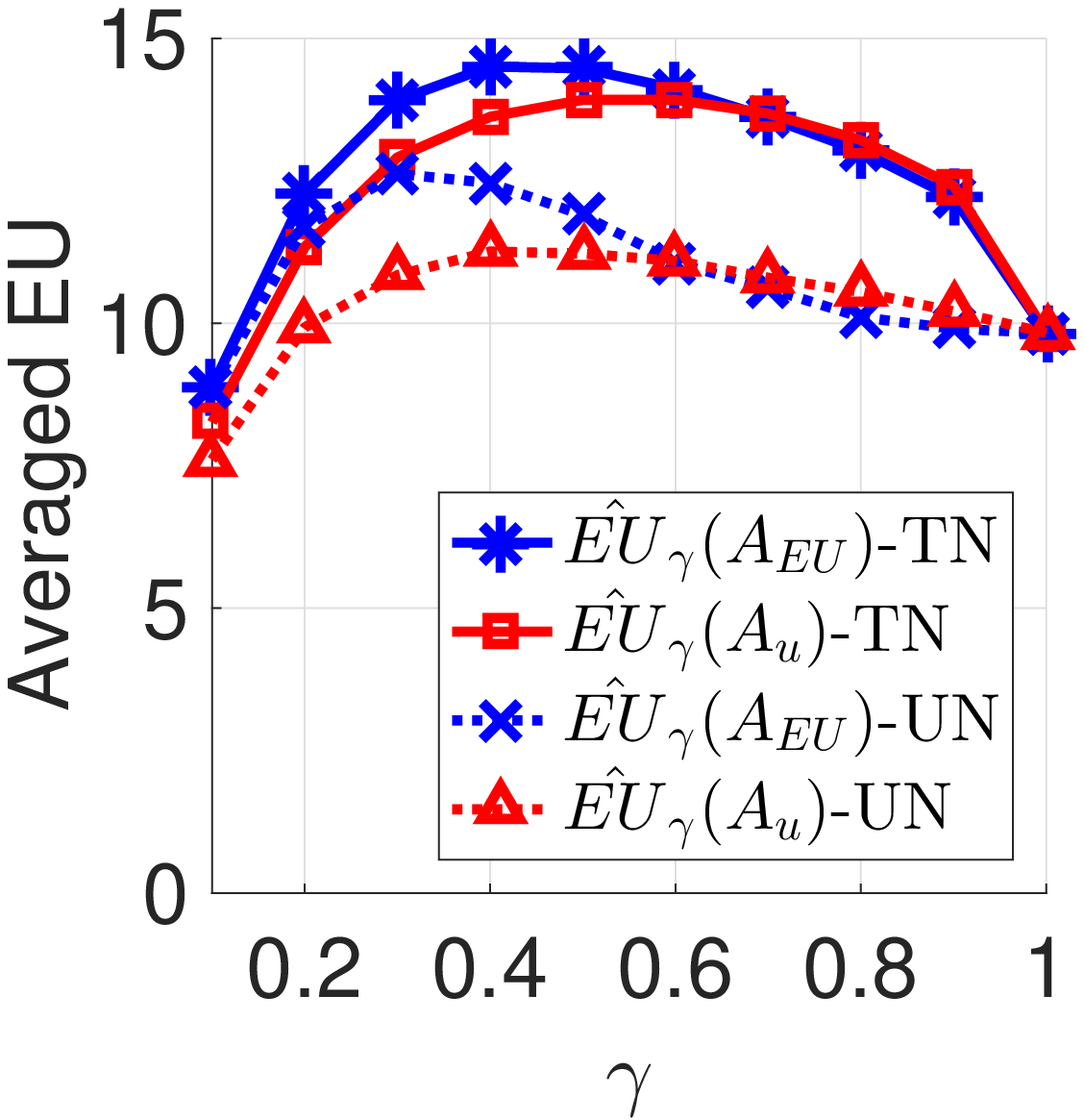}
                \caption{$\boldsymbol{n=60}$, $\kappa=8$, $\Delta c = 0.5$}
                \label{fig:avg_60_users_kappa_8_delta_0_5}
        \end{subfigure}
        \begin{subfigure}[b]{0.45\columnwidth}
                \includegraphics[width=1\textwidth]{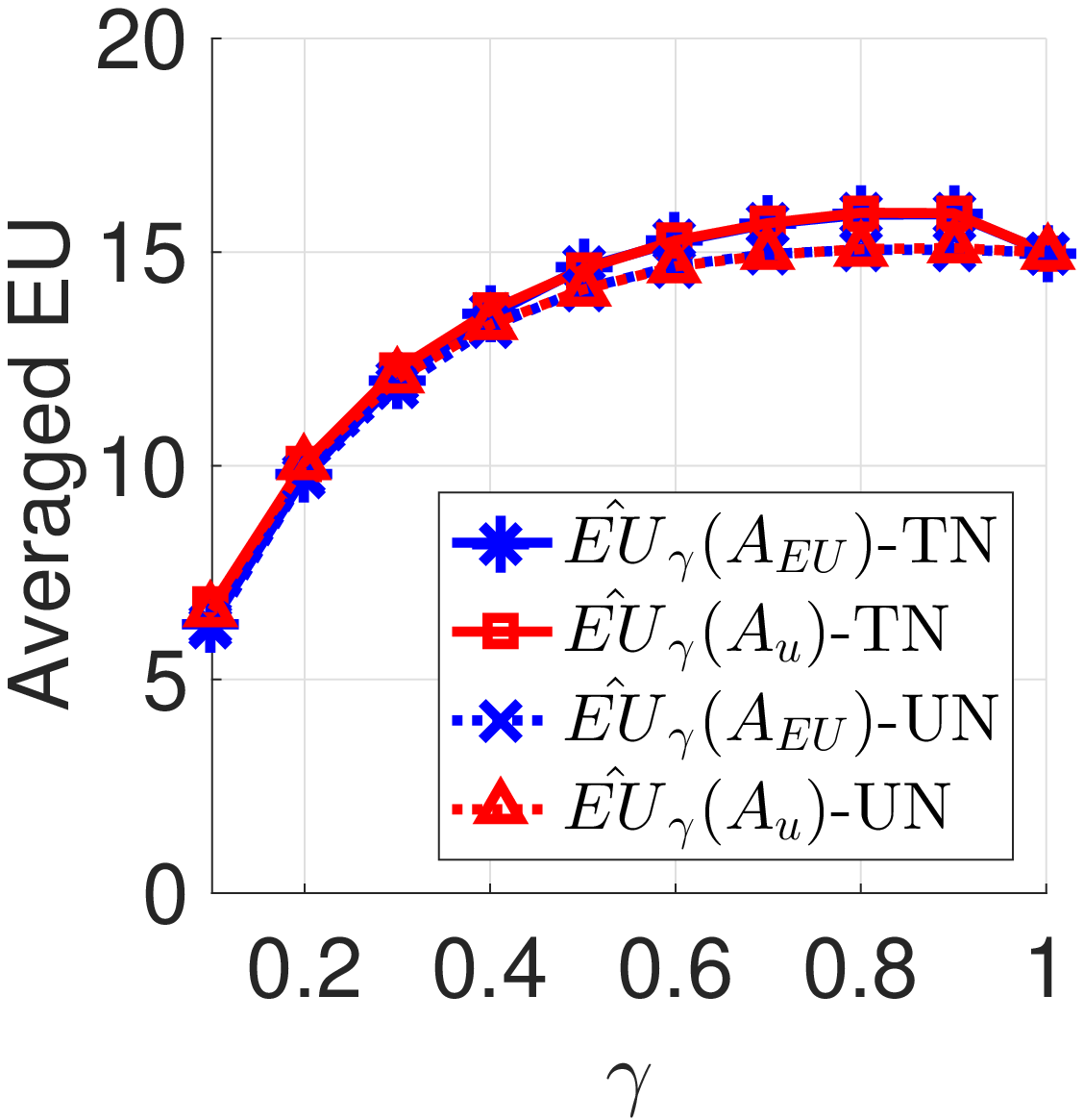}
                \caption{$n=30$, $\kappa=8$, $\boldsymbol{\Delta c = 0.1}$}
                \label{fig:avg_30_users_kappa_8_delta_0_1}
        \end{subfigure}
        \begin{subfigure}[b]{0.45\columnwidth}
                \includegraphics[width=1\textwidth]{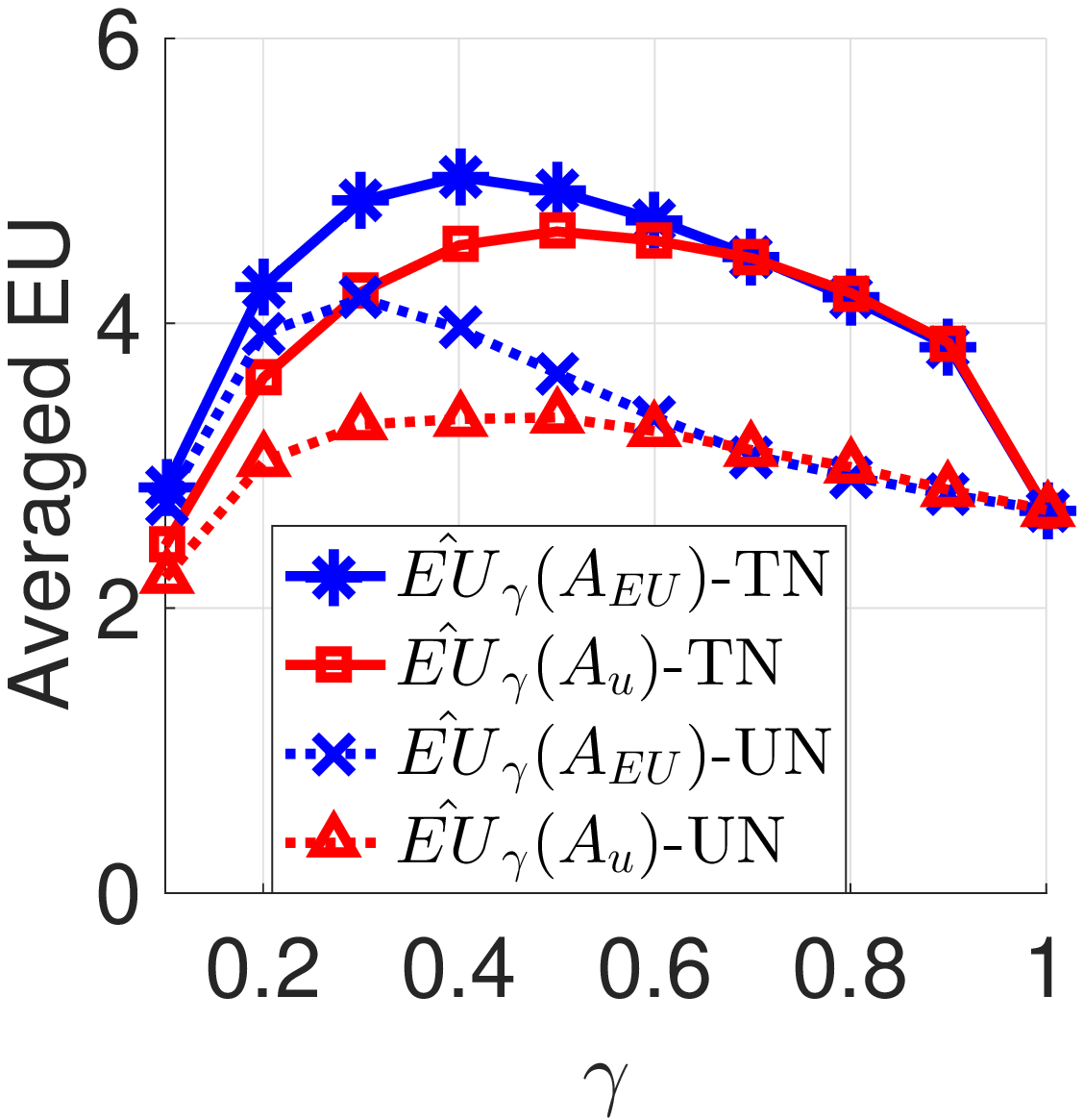}
                \caption{$n=30$, $\boldsymbol{\kappa=4}$, $\Delta c = 0.5$}
                \label{fig:avg_30_users_kappa_4_delta_0_5}
        \end{subfigure}
        \caption{Average EU achieved by USM-u and USM-EU for UN and TN cost distributions under different settings.}
        \label{fig:USM_u_vs_USM_EU_avg}
\end{figure}


In Fig.~\ref{fig:avg_30_users_kappa_8_delta_0_5}, \rev{we first observe that $\gamma^*$ achieving the maximum averaged EU is less than $1$ for both USM-EU and USM-u under both UN and TN distributions}. 
Intuitively, with a smaller $\gamma$, the platform can save money per user and send out more offers.
Although each offer is less likely to be accepted, the platform achieves a greater EU on average.
Second, at $\gamma=\gamma^*$, USM-EU achieves a higher EU than USM-u, especially for UN.
Besides, the fact that more money is saved per user under TN than UN with the same $\gamma$ explains the observation that both USM-EU and USM-u achieve a larger EU under TN than UN.

Similar behaviors are observed in Fig.~\ref{fig:avg_60_users_kappa_8_delta_0_5} and \ref{fig:avg_30_users_kappa_4_delta_0_5}.
But in Fig.~\ref{fig:avg_30_users_kappa_8_delta_0_1} when $\Delta c$ is changed from $0.5$ to $0.1$, the uncertainty in opportunity costs is reduced and energy costs become more dominant. 
In this case, the platform will not save much per user with a small $\gamma$ and should choose a larger $\gamma$.
Besides, we do not observe the advantage of USM-EU.

Furthermore, as shown in Fig.~\ref{fig:avg_30_users_kappa_8_delta_0_5}, USM-EU is slightly better than USM-u with $\gamma$ between 0.3 and 0.6 under UN, but their performance is very close for $\gamma\leq 0.2$ under UN and for all $\gamma$ values under TN. 
This is mainly because in those cases, both mechanisms will send out offers to more or all users (30 max. in this setting) and thus achieve very close EU.
When there are more users (Fig.~\ref{fig:avg_60_users_kappa_8_delta_0_5}), USM-EU is better than USM-u under both UN and TN \rev{for $\gamma < 0.6$}.
With smaller $\Delta c$ (Fig.~\ref{fig:avg_30_users_kappa_8_delta_0_1}), USM-EU is less advantageous. 
With smaller $\kappa$ (Fig.~\ref{fig:avg_30_users_kappa_4_delta_0_5}), each user is less valuable and USM-u selects fewer users, since it assumes no \rev{expired or rejected offers}.
In contrast, USM-EU considers the average-case utility and is more aggressive in user selection, which explains its better performance than USM-u with the same small $\gamma$. 

\subsection{Batched Offering vs. Sequential Offering}\label{sec:UM_PA_vs_OP}
In this simulation, we compare the following mechanisms: 
\rev{
\begin{itemize}
\item SB-u/EU: single-batch offering with USM-u or USM-EU;
\item MB-u/EU: multi-batch offering with USM-u or USM-EU;
\item SE: sequential offering.
\end{itemize}
}

\begin{figure}[ht!]
\centering
\begin{subfigure}[b]{0.45\columnwidth}
        \includegraphics[width=1\textwidth]{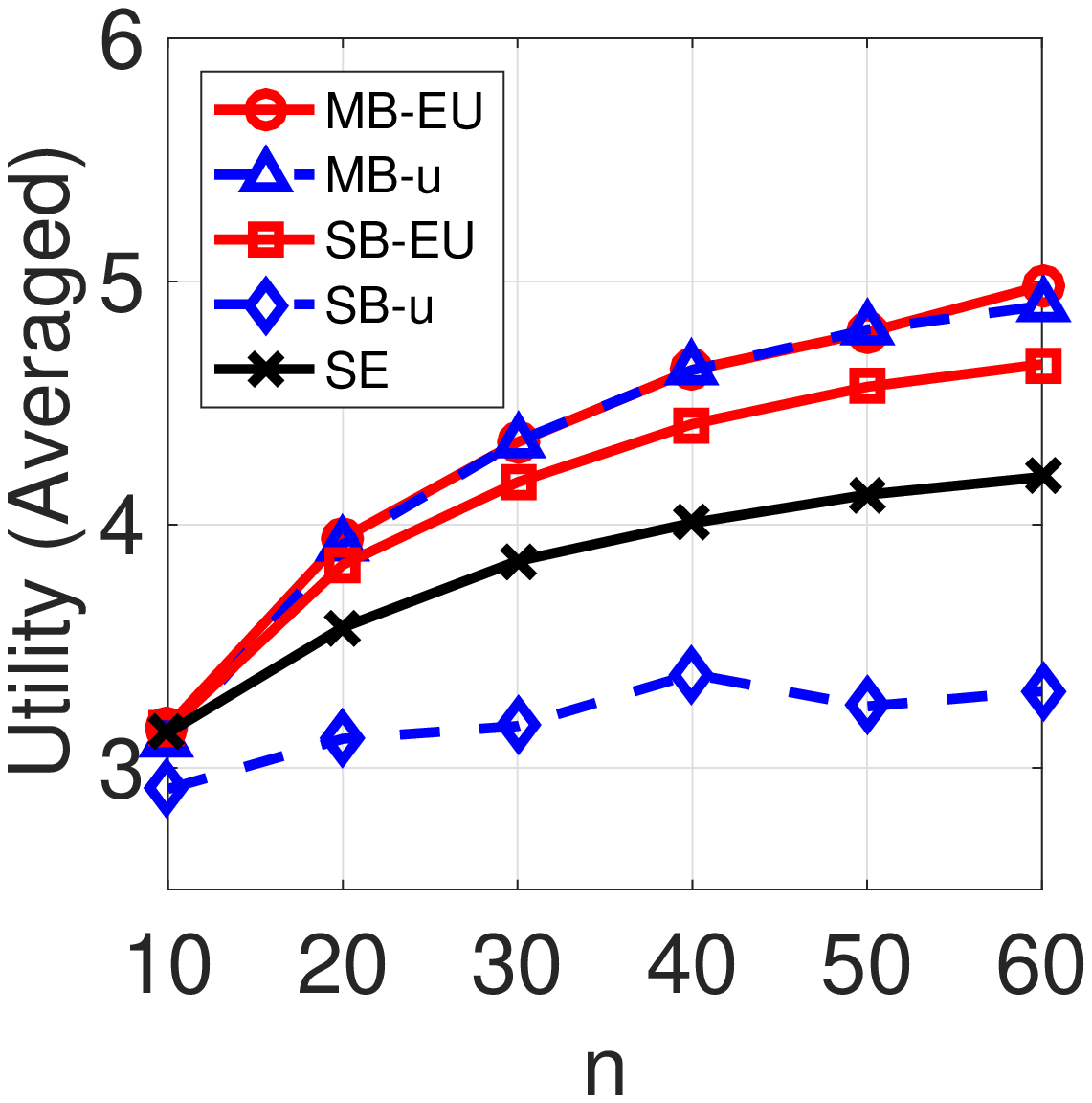}
        \caption{}
        \label{fig:impact_of_n_uniform}
\end{subfigure}
\begin{subfigure}[b]{0.45\columnwidth}
        \includegraphics[width=.93\textwidth]{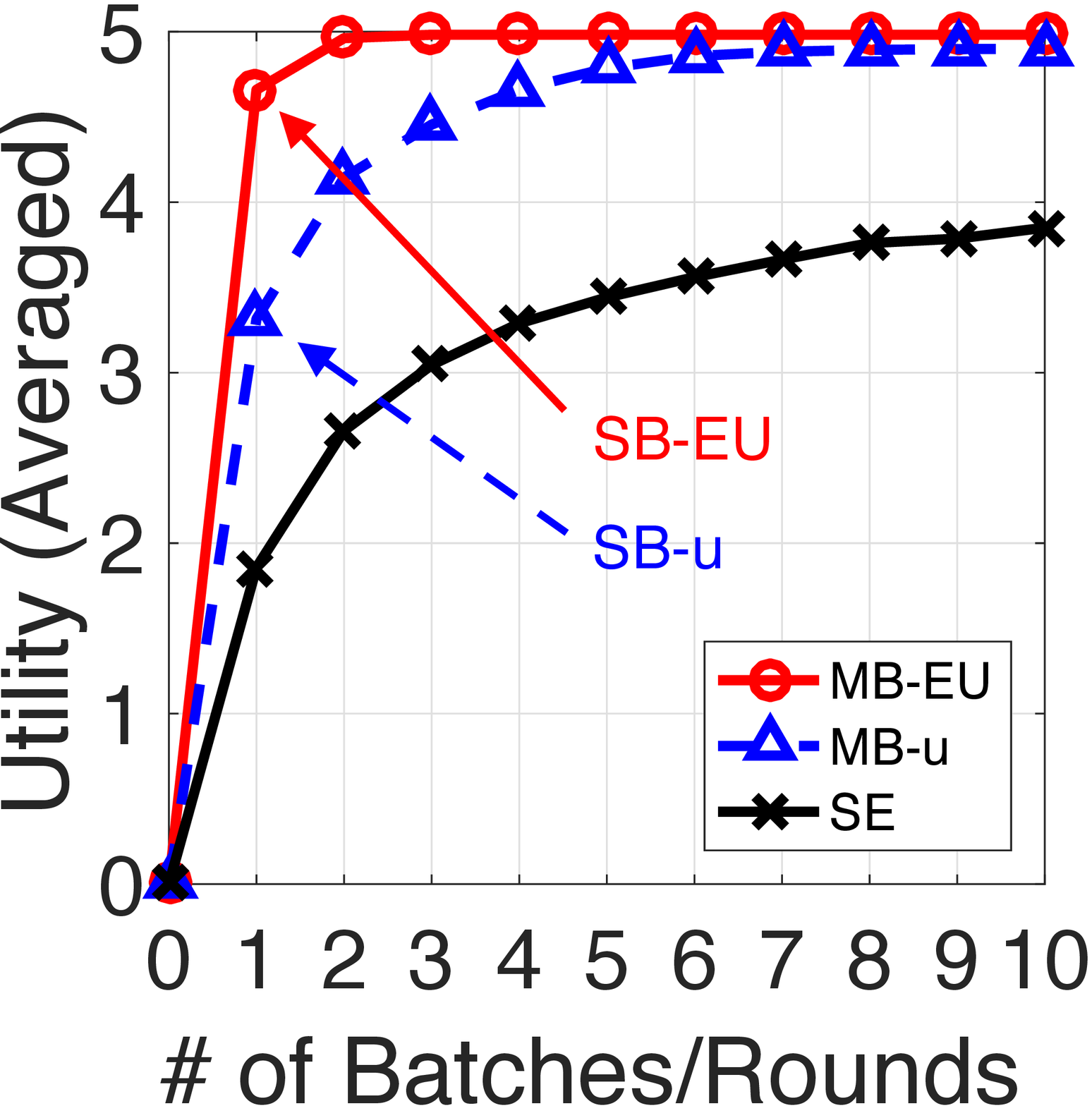}
        \caption{}
        \label{fig:utility_vs_num_of_batches_uniform}
\end{subfigure}
\begin{subfigure}[b]{0.45\columnwidth}
        \includegraphics[width=1\textwidth]{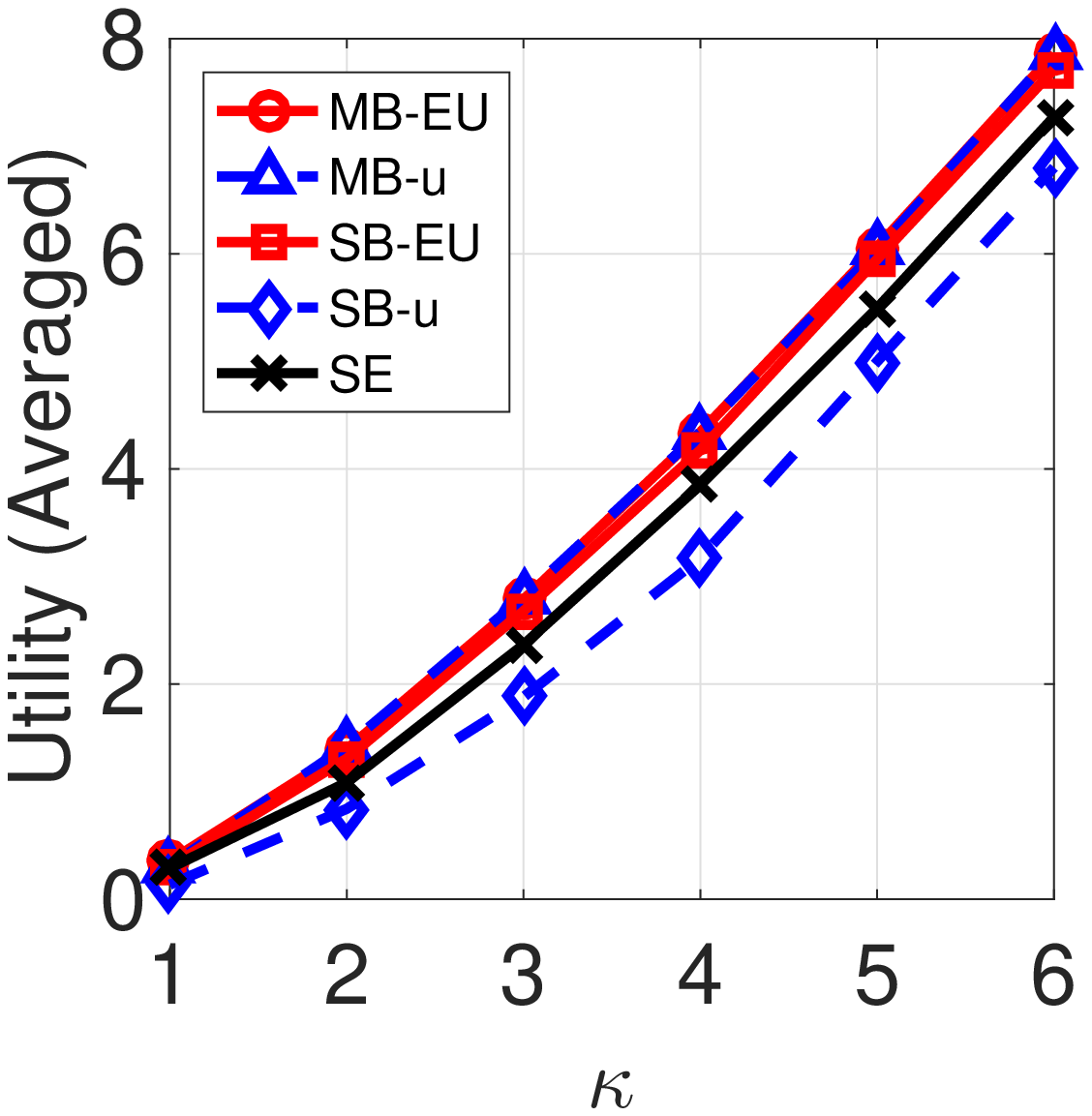}
        \caption{}
        \label{fig:impact_of_kappa_uniform}
\end{subfigure}
\begin{subfigure}[b]{0.45\columnwidth}
        \includegraphics[width=1\textwidth]{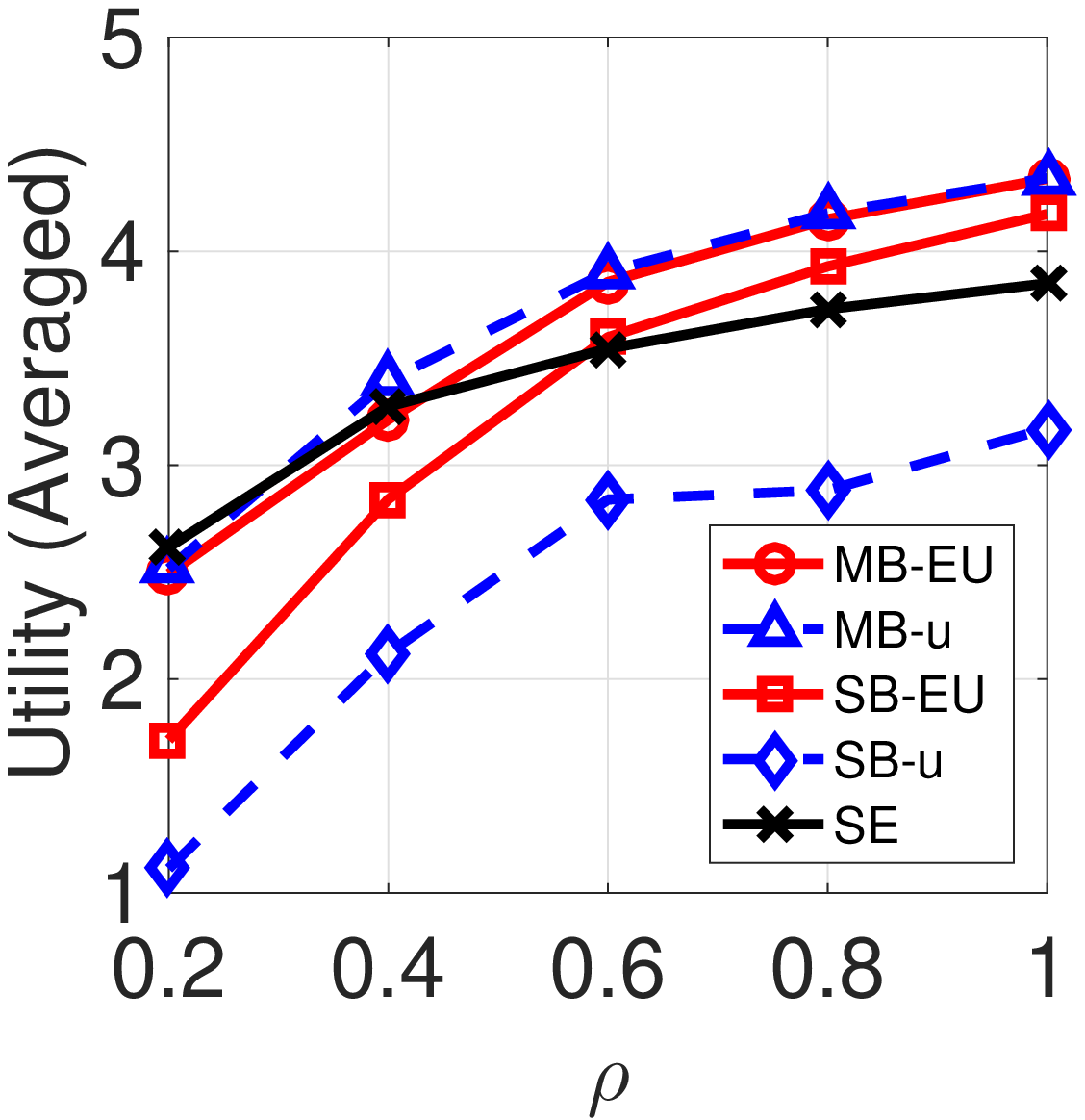}
        \caption{}
        \label{fig:impact_of_P_d_uniform}
\end{subfigure}
\caption{Comparison of SB-u, MB-u, SB-EU, MB-EU and SE.
        (a) Impact of $n$ ($\kappa=4$, no expired offers). 
        (b) Utility vs. number of batches of offers ($n=60$, $\kappa=4$, no expired offers).
        (c) Impact of $\kappa$ ($n=30$, no expired offers).
        (d) Impact of $\rho$ ($n=30$, $\kappa=4$, $\rho_i=\rho$ for each user $i$).}
\label{fig:comparison}
\end{figure}

We randomly select $n$ out of 60 users, and generate a set of noise variances and cost distributions, which are used in all iterations for each $n$. 
In the $i$-th iteration, a different set of sensing costs is independently generated from the cost distributions and used across different mechanisms for fair comparison.
We set $\Delta c = 0.5$, and varied $n$, $\kappa$ or $\{\rho_i\}$ to study their impacts on the average utility achieved by the platform.
All results are averaged over $50$ iterations. 
\rev{Due to space limit, only results for UN distributions are reported, but similar observations exist for TN distributions.}

\subsubsection{Impact of $n$ (number of users)}
We first set $\kappa=\rev{4}$ and $\rho_i = 1$ for each user $i$ (i.e., no expired offers). 
$n$ is varied from $10$ to $60$, and \rev{results are provided in Fig.~\ref{fig:impact_of_n_uniform}.}
\rev{First, we observe that all mechanisms \rev{achieve a higher utility on average} as $n$ increases.
Second, if only one batch/round is allowed, USM-EU achieves the highest utility, since it accounts for possible expiration and rejection and thus makes more offers in the first batch. 
The improvement of USM-EU (i.e., SB-EU) over USM-u (i.e., SB-u) varies from $8.5\%$ with $n=10$ to $40.5\%$ with $n=60$ (Fig.~\ref{fig:impact_of_n_uniform}), which means the advantage of SB-EU (over SB-u) is more obvious with more users. 
SE performs the worst, since it only sends out one offer in the first round. 

When more batches are allowed (Fig.~\ref{fig:utility_vs_num_of_batches_uniform}), all mechanisms perform better, since it is always beneficial to send out more batches to make up for expired or rejected offers in the previous batch.
Since USM-EU is very generous in making offers in the first batch, following batches become less profitable.
If the maximum number of batches is unlimited, USM-u (i.e., MB-u) eventually achieves very close performance with USM-EU (i.e., MB-EU), but the price is a much larger cumulative delay.
For instance, when $n=60$, MB-EU and MB-u make $2.5$ and $7.7$ batches of offers on average. 
The number is $24.9$ for SE, which is the worst.}

\subsubsection{Impact of $\kappa$ (currency in Eq.~(\ref{eq:valuation_function}))} 
We then fix $n=30$ and vary $\kappa$ from 1 to 6.
\rev{We set $\rho_i = 1$ for each user $i$.}
As shown in Fig.~\ref{fig:impact_of_kappa_uniform}, the average utility obtained by each mechanism increases as $\kappa$ increases, because the platform values per unit MI (log-scaled) more and is able to recruit more users.
\rev{Besides, SB-EU is still better than SB-u for different $\kappa$, but its advantage is less obvious when $\kappa$ gets larger.
For instance, the improvement is $127.1\%$ with $\kappa = 1$, but reduces to $13.4\%$ with $\kappa = 6$.}
Moreover, multi-batch offering is better than single-batch offering for both USM-EU and USM-u, \rev{but the improvement is more significant for USM-u}.


\subsubsection{Impact of $\rho$ (probability of unexpired offers)}\label{sec:impact_of_rho}
We set $n=30$ and $\kappa=4$. 
For simplicity, we assume $\rho_i = \rho$ for each user $i$ and vary $\rho$ from $0.2$ to $1$.
Results are provided in Fig.~\ref{fig:impact_of_P_d_uniform}.
First, we observe that all mechanisms are adversely affected when $\rho$ decreases.
Even though the platform knows $\rho$ and adjusts the price as in Eq.~(\ref{eq:pricing_rule}) to achieve $\gamma$, i.e., increasing the acceptance probability to offset the high expiration probability, it implies higher prices for users and consequently reduced utility. 
As mentioned earlier in Section~\ref{sec:sequential_offering}, the best price for each user in SE does not depend on $\rho$, but the resulting marginal EU does. 
Nevertheless, SE observes the outcome of the previous offer and continues offering, which explains why it (as well as multi-batch offering) is more robust against offer expiration than single-batch offering.

\section{Conclusion}\label{sec:conclusion}
In this work, we designed a crowd-sensing system for spatial-statistics-based radio mapping and developed pricing mechanisms, i.e., sequential and batched offering, based on EU maximization.
\rev{
We conducted extensive simulations to evaluate proposed mechanisms.
Our results show that if only one batch is allowed, the proposed mechanism based on EU maximization is significantly better than the utility-maximization-based baseline mechanism.
If multiple batches are permitted (and the number of batches is unlimited), the proposed mechanism achieves close performance with the baseline mechanism, but requires much fewer batches and thus a much smaller delay.
Sequential offering works better than the single-batch baseline mechanism, but has a much larger cumulative delay. 
Offer expiration adversely affected all mechanisms, but sequential and batched offering are relatively more robust.}


\appendices
\section{}
\subsection{Proof of Lemma~\ref{lemma:v_submodular}}
\label{proof:v_submodular}
Consider two sets $A$, $B$ such that $A\subseteq B \subseteq S$ and any $i \in S\setminus B$.
Let $f(A)=1+MI'(A)$.
First, we know that $f(A)$ is submodular from the submodularity of $MI(A)$, i.e.,
\begin{align}
f(A\cup\{i\})-f(A)&=MI(A\cup\{i\})-MI(A)+\alpha \nonumber \\
&\geq MI(B\cup\{i\})-MI(B)+\alpha \nonumber \\
&= f(B\cup\{i\})-f(B).
\end{align}
We also know that $f(\mathcal{A})$ is monotone, since $MI(A)$ is $\alpha$-approximately monotone and thus 
\begin{equation}
f(A\cup\{i\})-f(A)=MI(A\cup\{i\})-MI(A)+\alpha\geq 0.
\end{equation}

Let $a=f(A)$, $b=f(A\cup\{i\})$, $c=f(B)$, and $d=f(B\cup\{i\})$.
From the submodularity and monotonicity of $f(\cdot)$, we have $b-a\geq d-c \geq 0$ and $d\geq b$.
Let $a'=b-(d-c) \geq a$. 
Since $\log(\cdot)$ is non-decreasing concave, we have $\log(b)-\log(a)\geq \log(b)-\log(a')\geq \log(d)-\log(c) \geq 0$.
Hence, $v(A\cup\{i\})-v(A) \geq v(B\cup\{i\})-v(B)\geq 0$, and $v(\cdot)$ is submodular monotone.

\subsection{Proof of Lemma~\ref{lemma:EU_submodular}}\label{proof:EU_submodular}
We notice that $u(A_{\bf y}, {\bf p})$ is submodular in $A_{\bf y}$ given ${\bf p}$, since $v(\cdot)$ is submodular (Lemma~\ref{lemma:v_submodular}) and
\begin{align}
u(A_{\bf y} \cup \{i\}, {\bf p}) & - u(A_{\bf y}, {\bf p}) = v(i|A_{\bf y} ) - p_i \geq v(i|B) - p_i \nonumber \\
&\geq u(B \cup \{i\}, {\bf p})-u(B, {\bf p})
\end{align}
for $A_{\bf y} \subseteq B \subseteq S$ and any $i \in S \setminus B$.
Since the class of submodular functions are closed under taking expectations, it follows that $EU(A, {\bf p})$ is submodular in $A$ given $ {\bf p}$.


\subsection{Proof of Theorem~\ref{theorem:USM_with_approx_EU_bound}} \label{proof:USM_with_approx_EU_bound}
Before proving Theorem~\ref{theorem:USM_with_approx_EU_bound}, we first prove the following lemma. 
\begin{mylemma}\label{lemma:USM_with_approx_EU_bound0}
Given a nonnegative submodular function $f:2^S \mapsto \mathbb{R}_+$ and its estimate $\hat{f}$ with $|\hat{f}(A)-f(A)| \leq \epsilon$ for any $A \subseteq S$ and some small $\epsilon>0$, {\tt USM}($S$, $\hat{f}$) returns a solution $A$ with the following performance guarantee, 
\begin{equation}
f(A) \geq \frac{1}{3}f(OPT) - \frac{1}{3}(2n + 2)\epsilon
\end{equation}
where $OPT = \arg \max_{A \subseteq S} f(A)$ and $n=|S|$.
\end{mylemma}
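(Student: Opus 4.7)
My plan is to mirror the standard analysis of the deterministic double-greedy (USM) algorithm of Buchbinder, Feldman, Naor, and Schwartz, while carefully tracking the perturbation introduced by running the algorithm on $\hat{f}$ instead of $f$. Recall that USM initializes $X_0 = \emptyset$ and $Y_0 = S$, processes the elements $u_1,\ldots,u_n$ in order, and at step $i$ compares marginal gains $a_i = f(X_{i-1}\cup\{u_i\}) - f(X_{i-1})$ and $b_i = f(Y_{i-1}\setminus\{u_i\}) - f(Y_{i-1})$, either adding $u_i$ to $X$ or removing it from $Y$ according to which marginal is larger, so that eventually $X_n = Y_n = A$. Since USM actually operates on $\hat{f}$, each decision is based on $\hat{a}_i$ and $\hat{b}_i$; the first step is to observe that $|\hat{a}_i - a_i| \leq 2\epsilon$ and $|\hat{b}_i - b_i| \leq 2\epsilon$ by the triangle inequality applied to the two set-level estimates involved in each marginal.

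The heart of the proof is then to adapt the standard per-step bound $f(\mathit{OPT}_{i-1}) - f(\mathit{OPT}_i) \leq c_i$, where $\mathit{OPT}_i = (\mathit{OPT} \cup X_i) \cap Y_i$ and $c_i$ is the marginal gain of the branch actually chosen, to a perturbed inequality $f(\mathit{OPT}_{i-1}) - f(\mathit{OPT}_i) \leq c_i + \Delta_i$, with $\Delta_i$ accounting for the fact that the algorithm may have picked the ``wrong'' branch under the true $f$. A case analysis on whether $u_i \in \mathit{OPT}$, exactly as in the standard proof, yields $f(\mathit{OPT}_{i-1}) - f(\mathit{OPT}_i) \leq \max(a_i,b_i)$; combining this with the algorithm's decision rule $\hat{c}_i \geq \hat{d}_i$ together with the $2\epsilon$ slack on each estimate bounds $\Delta_i$ by a small multiple of $\epsilon$. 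Summing over $i=1,\ldots,n$ gives $f(\mathit{OPT}) - f(A) \leq \sum_i c_i + \sum_i \Delta_i$, and the chosen marginals telescope to $\sum_i c_i = f(X_n) + f(Y_n) - f(\emptyset) - f(S) = 2f(A) - f(\emptyset) - f(S)$. Using nonnegativity of $f$, together with the boundary terms $f(\emptyset), f(S)$ contributing an additional $2\epsilon$ from their $\hat{f}$-estimates, yields $3 f(A) \geq f(\mathit{OPT}) - (2n+2)\epsilon$, which rearranges to the claimed bound.

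The main obstacle is pinning down the per-step slack $\Delta_i$ to exactly $2\epsilon$ so that the accumulated additive loss matches $(2n+2)\epsilon$ rather than the looser $4n\epsilon$ one might naively obtain. This requires careful bookkeeping through the submodularity-based case analysis underlying the standard USM proof, exploiting that the algorithm's decision under $\hat{f}$ can only invert the true ordering of $a_i$ and $b_i$ by a controlled amount, and that one of the two sub-cases (when $u_i \in \mathit{OPT}$) is essentially a free inequality. The $+2$ in the constant corresponds transparently to the two endpoint values $f(\emptyset)$ and $f(S)$ that appear at the boundary of the telescoping sum and whose $\hat{f}$-approximations each cost one additive $\epsilon$.
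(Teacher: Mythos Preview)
Your outline follows the right skeleton, but the per-step accounting you describe cannot deliver the $(2n+2)\epsilon$ constant. You telescope over the \emph{true} marginals $c_i$ of $f$ and hope to bound the slack $\Delta_i$ by $2\epsilon$. In the sub-case where the algorithm adds $u_i$ (so $\hat a_i\ge\hat b_i$) and $u_i\notin OPT$, the standard submodularity step gives $f(OPT_{i-1})-f(OPT_i)\le b_i$, and you then need $b_i\le a_i+2\epsilon$. But from $\hat a_i\ge\hat b_i$ together with $|\hat a_i-a_i|\le 2\epsilon$ and $|\hat b_i-b_i|\le 2\epsilon$ you only get $a_i\ge b_i-4\epsilon$; the two conversions (one on each side of the comparison) are unavoidable, and a simple numerical example shows $b_i$ can exceed $a_i+2\epsilon$. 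So your per-step slack is $4\epsilon$, which is exactly the ``looser $4n\epsilon$'' you say you want to avoid. Your explanation of the extra $+2\epsilon$ is also off: since you telescope over $f$, the boundary terms $f(\emptyset),f(S)$ are already true values and need no $\hat f$-approximation.

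The paper sidesteps this by telescoping over $\hat f$ rather than $f$: it proves
\[
f(OPT_{i-1})-f(OPT_i)\ \le\ [\hat f(A_i)-\hat f(A_{i-1})]+[\hat f(B_i)-\hat f(B_{i-1})]+2\epsilon,
\]
so only one $f\to\hat f$ conversion (cost $2\epsilon$) is needed per step, after which the comparison $\hat a_i\ge\hat b_i$ is applied for free in $\hat f$-space. The $+2\epsilon$ in the final constant then arises at the very end, when the telescoped endpoints $\hat f(A_n)+\hat f(B_n)$ are converted back to $f(A_n)+f(B_n)$. If you rewrite your argument to carry $\hat f$-increments through the telescoping and defer the conversion to the endpoints, the bound goes through with the stated constant.
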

\begin{proof}
Our proof is inspired by the proof in \cite{buchbinder2015tight}.
Let us start with Lemma~\ref{lemma:USM_with_approx_EU_bound1}.

\begin{mylemma}\label{lemma:USM_with_approx_EU_bound1}
For every $1 \leq i \leq n$, $a_i + b_i \geq - 4 \epsilon$, where $a_i = \hat{f}(A_{i-1} \cup \{u_i\}) - \hat{f}(A_{i-1})$ and $b_i = \hat{f}(B_{i-1} \setminus \{u_i\}) - \hat{f}(B_{i-1})$.
\end{mylemma}
\begin{proof}
Since $|\hat{f}(A)-f(A)| \leq \epsilon$, $\forall A \subseteq S$, we have
\begin{equation}\label{eq:range_of_estimated_f}
f(A) - \epsilon \leq \hat{f}(A) \leq f(A) + \epsilon, ~\forall A \subseteq S.
\end{equation}
Notice that $(A_{i-1} \cup \{u_i\}) \cup (B_{i-1} \setminus \{u_i\}) = B_{i-1}$, $(A_{i-1} \cup \{u_i\}) \cap (B_{i-1} - u_i) = A_{i-1}$. 
Based on both observations and submodularity of $f$, we get
\begin{align}
a_i + b_i &= [\hat{f}(A_{i-1} \cup \{u_i\}) - \hat{f}(A_{i-1})] \nonumber \\
&~~~~~~~ + [\hat{f}(B_{i-1} \setminus \{u_i\}) - \hat{f}(B_{i-1})] \\
&\geq [f(A_{i-1} \cup \{u_i\}) + f(B_{i-1} \setminus \{u_i\})] \nonumber \\
&~~~~~~~ - [f(A_{i-1}) + f(B_{i-1})] - 4\epsilon \geq -4 \epsilon.
\end{align}\end{proof}
Define $OPT_i \triangleq (OPT \cup A_i) \cap (B_i)$.
Thus, $OPT_0 = OPT$ and the algorithm outputs $OPT_n = A_n = B_n$.
Examine the sequence $f(OPT_0)$, ..., $f(OPT_n)$, which starts with $f(OPT)$ and ends with the $f$ value of the output of the algorithm.
The idea is to bound the total loss of value along this sequence. 

\begin{mylemma}\label{lemma:USM_with_approx_EU_bound2}
For every $1 \leq i \leq n$, we have 
\begin{align}
f(OPT_{i-1})-f(OPT_i) &\leq [\hat{f}(A_i)-\hat{f}(A_{i-1})] \nonumber \\
&~~+ [\hat{f}(B_i) - \hat{f}(B_{i-1})] + 2 \epsilon.
\end{align}
\end{mylemma}
\begin{proof}
W.L.O.G., we assume that $a_i\geq b_i$, i.e., $A_i \leftarrow A_{i-1} \cup \{u_i\}$, $B_i \leftarrow B_{i-1}$ (the other case is analogous).
Notice that in this case $OPT_i = (OPT \cup A_i) \cap B_i = OPT_{i-1} \cup \{u_i\}$, $B_i = B_{i-1}$ and $\hat{f}(B_i) = \hat{f}(B_{i-1})$.
Hence, the inequality we need to prove is that 
\begin{align}
f(OPT_{i-1}) &- f(OPT_{i-1} \cup \{u_i\}) \nonumber \\
&\leq [\hat{f}(A_i)-\hat{f}(A_{i-1})] + 2\epsilon = a_i + 2\epsilon
\end{align}
We now consider two cases. 
If $u_i \in OPT$, then the left-hand of the inequality is 0, and all we need to show is that $a_i \geq -2\epsilon$.
This is true since $a_i + b_i \geq - 4\epsilon$ by Lemma~\ref{lemma:USM_with_approx_EU_bound1}, and we assumed $a_i \geq b_i$.

If $u_i \notin OPT$, then also $u_i \notin OPT_{i-1}$, and thus
\footnotesize
\begin{align}
f(&OPT_{i-1}) - f(OPT_{i-1} \cup \{u_i\}) \leq f(B_{i-1} \setminus \{u_i\}) - f(B_{i-1}) \nonumber \\
&\leq \hat{f}(B_{i-1} \setminus \{u_i\}) - \hat{f}(B_{i-1}) + 2\epsilon = b_i + 2\epsilon \leq a_i + 2\epsilon.
\end{align}
\normalsize
The first inequality follows by submodularity: $OPT_{i-1} = ((OPT \cup A_{i-1}) \cap B_{i-1}) \subseteq (B_{i-1} \setminus \{u_i\})$ (recall that $u_i \in B_{i-1}$ and $u_i \notin OPT_{i-1}$).
The second and third inequalities follow from Eq.~(\ref{eq:range_of_estimated_f}) and our assumption that $a_i \geq b_i$, respectively.
\end{proof}

Summing up Lemma~\ref{lemma:USM_with_approx_EU_bound2} for every $1\leq i \leq n$, we have
\footnotesize
\begin{align}
&\sum_{i=1}^n [f(OPT_{i-1}) - f(OPT_i)] \nonumber \\
&\leq \sum_{i=1}^n [\hat{f}(A_i)-\hat{f}(A_{i-1})] +  \sum_{i=1}^n [\hat{f}(B_i) - \hat{f}(B_{i-1})] + 2 n \epsilon.
\end{align}
\normalsize

The above sum is telescopic and we have
\footnotesize
\begin{align}
f(OPT_{0}) &- f(OPT_n) \leq [\hat{f}(A_n)-\hat{f}(A_{0})] +  [\hat{f}(B_n) - \hat{f}(B_0)] + 2 n \epsilon \nonumber \\
& \leq \hat{f}(A_n) + \hat{f}(B_n) + 2 n \epsilon \leq f(A_n) + f(B_n) + (2n + 2) \epsilon
\end{align}
\normalsize
By our definition, $OPT_0 = OPT$ and $OPT_n = A_n = B_n$.
Then we obtain that $f(OPT) \leq 3 f(A_n) + (2n+2)\epsilon$ 
and $f(A_n)=f(B_n) \geq \frac{1}{3}f(OPT) - \frac{1}{3}(2n + 2)\epsilon$.
\end{proof}

Now let us prove Theorem~\ref{theorem:USM_with_approx_EU_bound}. 
By Lemma~\ref{lemma:USM_with_approx_EU_bound0}, we have $EU_\gamma'(A) \geq \frac{1}{3}EU_\gamma'(OPT) - \frac{1}{3}(2n + 2)\epsilon$, where $OPT = \arg \max_{A \subseteq S} EU_\gamma'(A)$, which is also the optimal solution to $\max_{A \subseteq S} EU_\gamma(A)$. 
By definition of $EU_\gamma'(A)$, we obtain $EU_\gamma(A) + p_0 \geq \frac{1}{3}[EU_\gamma(OPT) + p_0] - \frac{1}{3} (2n + 2)\epsilon$ and $EU_\gamma(A) \geq \frac{1}{3} EU_\gamma(OPT) - \frac{2}{3} p_0 - \frac{1}{3} (2n+2)\epsilon$.

\bibliographystyle{IEEEtran}
\bibliography{IEEEabrv}

%

\begin{IEEEbiography}[{\includegraphics[width=1in,height=1.25in,clip,keepaspectratio]{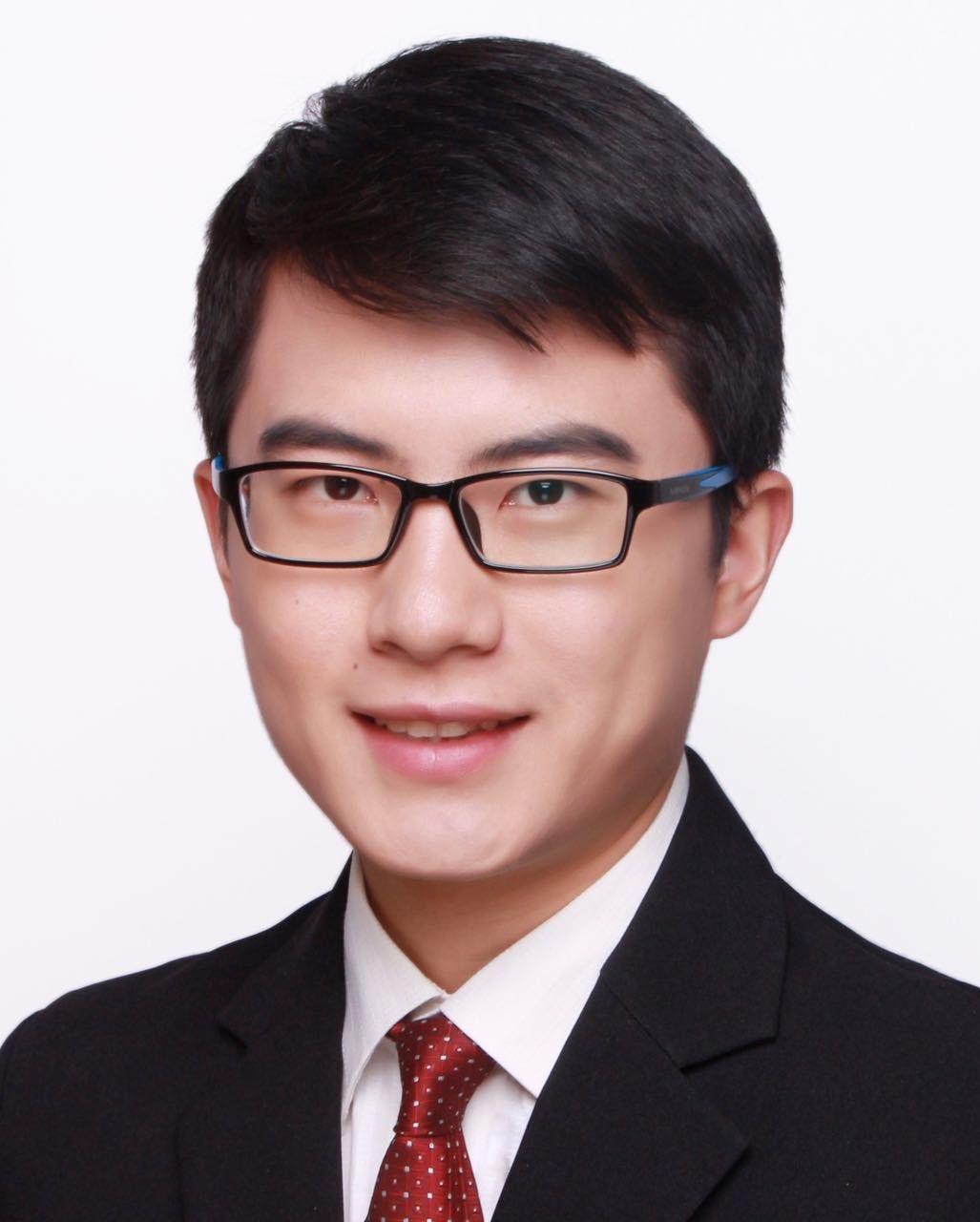}}]{Xuhang Ying}
(S'15) received the B.Eng. degree in Information Engineering from the Chinese University of Hong Kong in 2013, and the M.S. degree in Electrical Engineering in the University of Washington in 2016, where he is currently pursuing the Ph.D. degree in Electrical Engineering. 
His research has focused on incentivized crowd-sensing, resource allocation, spectrum enforcement in shared bands. 
\end{IEEEbiography}

\begin{IEEEbiography}[{\includegraphics[width=1in,height=1.25in,clip,keepaspectratio]{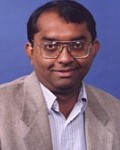}}]{Sumit Roy}
(S'84--M'88--SM'00--F'07) received the
B.Tech. degree in electrical engineering from the Indian Institute of Technology, Kanpur, India, in 1983 and the M.S. and Ph.D. degrees in electrical engineering and the M.A. degree in statistics and applied probability from the University of California Santa Barbara, CA, USA, in 1985, 1988, and 1988, respectively. 
He spent 2001--2003 on academic leave at Intel Wireless Technology Laboratory as a Senior Researcher engaged in research and standards development for ultrawideband systems (wireless PANs) and next-generation high-speed wireless LANs. 
He is currently a Professor of electrical engineering at the University of Washington, Seattle, WA, USA, where his research interests include analysis/design of communication systems/networks, with an emphasis on next generation mobile/wireless and sensor networks. 
He served as a Science Foundation of Ireland Isaac Walton Fellow during a sabbatical at the University College, Dublin, Ireland (January--June 2008), and was the recipient of a Royal Academy of Engineering (U.K.)
Distinguished Visiting Fellowship during summer 2011. 
He was elevated to IEEE Fellow by the Communications Society in 2007 for his contributions to multiuser communications theory and cross-layer design of wireless networking standards. 
\end{IEEEbiography}

\begin{IEEEbiography}[{\includegraphics[width=1in,height=1.25in,clip,keepaspectratio]{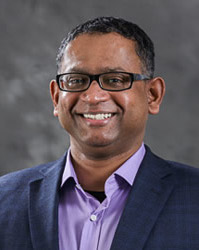}}]{Radha Poovendran}
(F'15) is a Professor and the Chair of the Electrical Engineering Department, Director of the Network Security Lab (NSL), and Associate Director of the Center for Excellence in Information Assurance Research and Education, all at the University of Washington - Seattle. 
He received the B.S. degree in Electrical Engineering and the M.S. degree in Electrical and Computer Engineering from the Indian Institute of Technology- Bombay and University of Michigan - Ann Arbor in 1988 and 1992, respectively. He received the Ph.D. degree in Electrical and Computer Engineering from the University of Maryland - College Park in 1999. 
His research interests are in the areas of wireless and sensor network security, control and security of cyber-physical systems, adversarial modeling, smart connected communities, control-security, gamessecurity and information theoretic security in the context of wireless mobile networks. 
He was elected a Fellow of the IEEE for his contributions to security in cyber-physical systems. 
He is a recipient of the NSA LUCITE Rising Star Award (1999), National Science Foundation CAREER (2001),
ARO YIP (2002), ONR YIP (2004), and PECASE (2005) for his research contributions to multi-user wireless security. 
He is also a recipient of the Outstanding Teaching Award and Outstanding Research Advisor Award from the University of Washington EE (2002), Graduate Mentor Award from Office of the Chancellor at University of California - San Diego (2006), and the University of Maryland ECE Distinguished Alumni Award (2016). 
He was co-author of award-winning papers including IEEE/IFIP William C. Carter Award Paper (2010) and WiOpt Best Paper Award (2012).
\end{IEEEbiography}

\end{document}